\documentclass[10pt,conference,letterpaper]{IEEEtran}

\usepackage{tikz}
\usepackage{amsmath}
\usepackage{filecontents}
\usepackage[english]{babel}
\usepackage{blindtext}
\usepackage{xcolor}
\usepackage{graphicx}
\usepackage{wrapfig}
\usepackage[justification=centering,skip=0pt]{caption}
\usepackage{subcaption}
\usepackage[normalem]{ulem}
\usepackage{mathtools}
\usepackage[linesnumbered,ruled,vlined]{algorithm2e}
\usepackage{courier}
\usepackage{booktabs}
\usepackage{multirow}
\usepackage{enumitem}
\usepackage{amsthm}
\usepackage[font=footnotesize]{caption}
\usepackage[backend=biber, style=ieee,minnames=1,maxnames=3]{biblatex}
\addbibresource{reference.bib}

\newtheorem{definition}{Definition}

\newtheorem{claim}{Claim}
\newcommand{\BfPara}[1]{{\noindent {\bf #1.}}}

\usepackage{titlesec}
\usepackage{eso-pic}
\usepackage{tcolorbox}
\usepackage{xcolor}

\newcommand\PlacePreprintNotice{
    \AddToShipoutPictureBG*{%
        \put(50,760){  
            \begin{minipage}{\textwidth}
            \centering
            \begin{tcolorbox}[colback=gray!10, boxrule=0pt, sharp corners, width=\textwidth, left=0pt, right=0pt, top=2pt, bottom=2pt]
                \small A version of this paper has been accepted for publication at IEEE INFOCOM 2025.\\
                © 2025 IEEE. Personal use of this material is permitted. Permission from IEEE must be obtained for all other uses.
            \end{tcolorbox}
            \end{minipage}
        }
    }
}
\AtBeginDocument{\PlacePreprintNotice}

\makeatletter
\newcommand{\linebreakand}{%
  \end{@IEEEauthorhalign}
  \hfill\mbox{}\par
  \mbox{}\hfill\begin{@IEEEauthorhalign}
}
\makeatother

\IEEEoverridecommandlockouts
\IEEEaftertitletext{\vspace{-2\baselineskip}}

\begin{document}

\title{Tree embedding based mapping system for low-latency mobile applications in multi-access networks \vspace{-.5em}}

\author{
    Yu Mi\IEEEauthorrefmark{1}, Randeep Bhatia\IEEEauthorrefmark{2},
    Fang Hao\IEEEauthorrefmark{2}, An Wang\IEEEauthorrefmark{1}, 
    Steve Benno\IEEEauthorrefmark{2}, Tv Lakshman\IEEEauthorrefmark{2} \\ 
    \IEEEauthorrefmark{1}Case Western Reserve University, OH, USA\\
    \IEEEauthorrefmark{2}Nokia Bell Labs, NJ, USA \\
    {\small Email: \IEEEauthorrefmark{1}\{yxm319, axw474\}@case.edu,  
    \IEEEauthorrefmark{2}\{randeep.bhatia, fang.hao, steven.benno, tv.lakshman\}@nokia-bell-labs.com} \\
}

\IEEEoverridecommandlockouts
\IEEEpubid{\makebox[\columnwidth]{XXX-X-XXXX-XXXX-X/25/\$31.00~\copyright2025 IEEE \hfill} \hspace{\columnsep}\makebox[\columnwidth]{ }}

\maketitle

\begin{abstract}
    Low-latency applications like AR/VR and online gaming need fast, stable connections. New technologies such as V2X, LEO satellites, and 6G bring unique challenges in mobility management. Traditional solutions based on centralized or distributed anchors often fall short in supporting rapid mobility due to inefficient routing, low versatility, and insufficient multi-access support. In this paper, we design a new end-to-end system for tracking multi-connected mobile devices at scale and optimizing performance for latency-sensitive, highly dynamic applications. Our system, based on the locator/ID separation principle, extends to multi-access networks without requiring specialized routers or caching. Using a novel tree embedding-based overlay, we enable fast session setup while allowing endpoints to directly handle mobility between them. Evaluation with real network data shows our solution cuts connection latency to 7.42\% inflation over the shortest path, compared to LISP's 359\% due to cache misses. It also significantly reduces location update overhead and disruption time during mobility.
\end{abstract}

\section{Introduction}

In highly dynamic environments characterized by frequent location changes, establishing and maintaining stable, low-latency connections can be a significant challenge. Traditional mobility protocols, such as MIPv6 (Mobile IPv6)~\cite{rfc6275} and PMIPv6 (Proxy Mobile IPv6)~\cite{rfc5213}, typically direct all communication paths of a mobile node (MN) through a single anchor point, creating lengthy detours that result in substantial delays. This issue is particularly problematic in applications like V2X (Vehicle-to-Everything) communication, where hosts can rapidly move significant distances, causing them to quickly become distant from their respective anchors~\cite{V2X}, and, as a result, adding significant latency.
Moreover, in applications like LEO (Low Earth Orbit) satellite communication, anchors placed on satellites can move quickly away from the host~\cite{LEO-SURVEY}, leading to increased latency and reduced connection stability. Ground-based anchors can also result in inefficient routing due to the need for an extra round trip between the satellite and ground stations.

One potential solution is to use distributed mobility management (DMM), which allows the MN to be allocated a new close by anchor at the point of attachment~\cite{rfc8818,rfc8885}. To maintain session continuity while moving across different networks, the old anchors either tunnel the existing packet flows to the new anchor, incurring high latency; or they continue to forward the existing packet flows to the MN, incurring high latency and excessive overhead for maintaining multiple anchors for the MN at the same time~\cite{IEEE-DISTRIBUTED-MOBILITY}.

Partial or full DMM approaches, such as LISP (Locator/ID Separation Protocol)~\cite{rfc9300,rfc9301},  offer a solution to the session continuity issue through the separation of end-user device identifiers and the routing locators used to reach them. 
LISP utilizes a database that maps  identifiers to  routing locators, with caching of this mapping at multiple ingress routers or network nodes. 
This method enables direct forwarding to the device on low-latency paths.
However, LISP has limitations during cache misses and updates.
Cache misses in LISP can result in increased session setup delay, while cache updates can pose scalability concerns, since device movement in proximity can trigger update to a large number of remote caches, creating large disruptions.

In today's multi-connected mobile devices, there are numerous opportunities to optimize routing through dynamic steering and switching of application traffic across diverse access networks (cellular, Wi-Fi, wired).
However, existing mobility management solutions, which were initially designed for single access technologies, cannot achieve this effectively. 

To address these challenges, we propose a novel end-to-end approach for managing mobility of multi-connected mobile devices at scale to optimize performance for latency-sensitive applications.
Our approach extends the Locator/ID separation idea from LISP to support multi-access and introduces a distributed scheme for mapping identifiers to locators for individual accesses.
Our design has two key differences from LISP: 
i) an optimized Locator LookuP (LLP) service using a new tree-embedding algorithm is provided through a low-latency network overlay, enabling fast connection setup with on-the-fly location discovery;
ii) the mobility function is supported at end devices as an added software module, which eliminates the need to modify existing routers and avoids caching and the associated cache maintenance overhead.

Compared to existing solutions, our design has the following unique advantages:
i) It eliminates inefficient triangular routing while maintaining session continuity. 
Our tree-embedding algorithm ensures the initial setup messages are forwarded to the destination by the LLP overlay on an optimized path that is only slightly longer than the direct least  latency path.
Moreover, all subsequent messages are forwarded  on the direct  least latency path between the devices.
ii) Location updates are handled efficiently and swiftly.
With our tree-embedding algorithm, only a few nodes within the LLP overlay typically need to be updated when the location or access connectivity of a device changes.
iii) Allowing end hosts to perform access selection and switching  enables a wide range of complex flow management policies for selecting and maintaining the best low-latency route. The selection can be individual for each connection in each direction, regardless of the underlying network technology, allowing for greater flexibility and improved performance. 
iv) Not requiring router change make the solution easier to be deployed incrementally.
Mobility modules can be downloaded to devices as part of a software update, and LLP overlay  nodes can be gradually deployed across different Point-of-Presences (PoP).

We have evaluated our solution by using real network topology and latency data. 
Results show that our solution significantly reduces connection latency, location update overhead, and connection disruption time over existing solutions.
For instance, on average, our connection setup latency is only 7.42\% higher than the direct path latency, compared to 359\% for LISP.
Likewise, our solution cuts down the resource cost for storing and maintaining location updates by 4 times and reduces connection disruption time under mobility by 2.6 times, compared to LISP.

This paper unfolds as follows: 
Section~\ref{sec:solution_description} presents our overall solution.
Section~\ref{sec:design:llplocation} describes the LLP tree construction algorithm, the key component of our design.
Section~\ref{sec:implementation} shows the implementation ideas.
Section~\ref{sec:eval} describes the performance evaluation.
And finally we discuss related work and conclusion.

\section{Solution Description}
\label{sec:solution_description}

\begin{figure}[h!]
\vspace{-2em}
\centering
    \begin{subfigure}{0.40\columnwidth}
        \includegraphics[width=\textwidth]{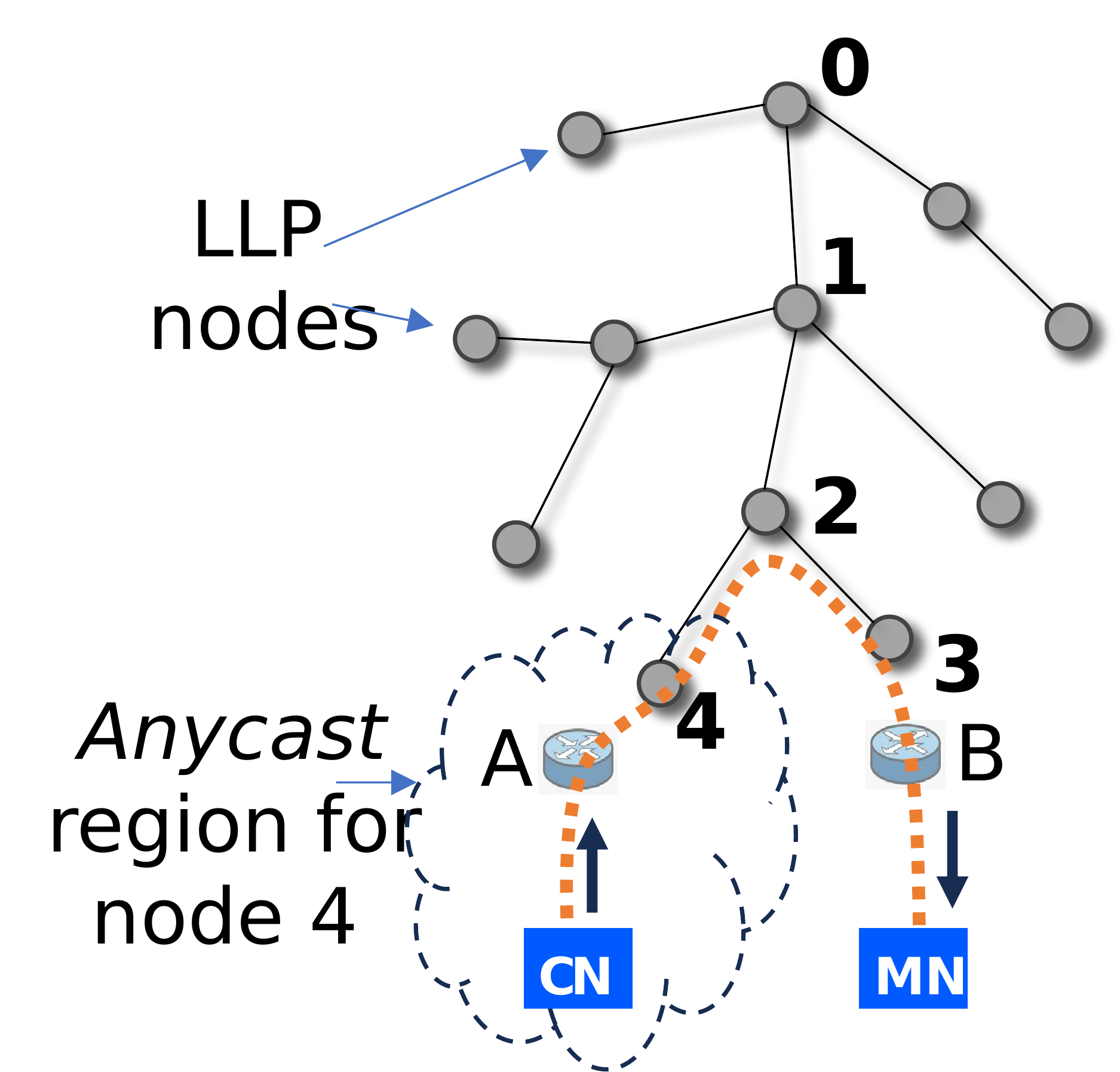}
        \caption{CSR Forwarding on LLP tree}
        \label{fig:LLP-tree-basic:A}
    \end{subfigure}
    \rule{0.1pt}{4.cm} 
    \begin{subfigure}{0.30\columnwidth}
        \includegraphics[width=\textwidth]{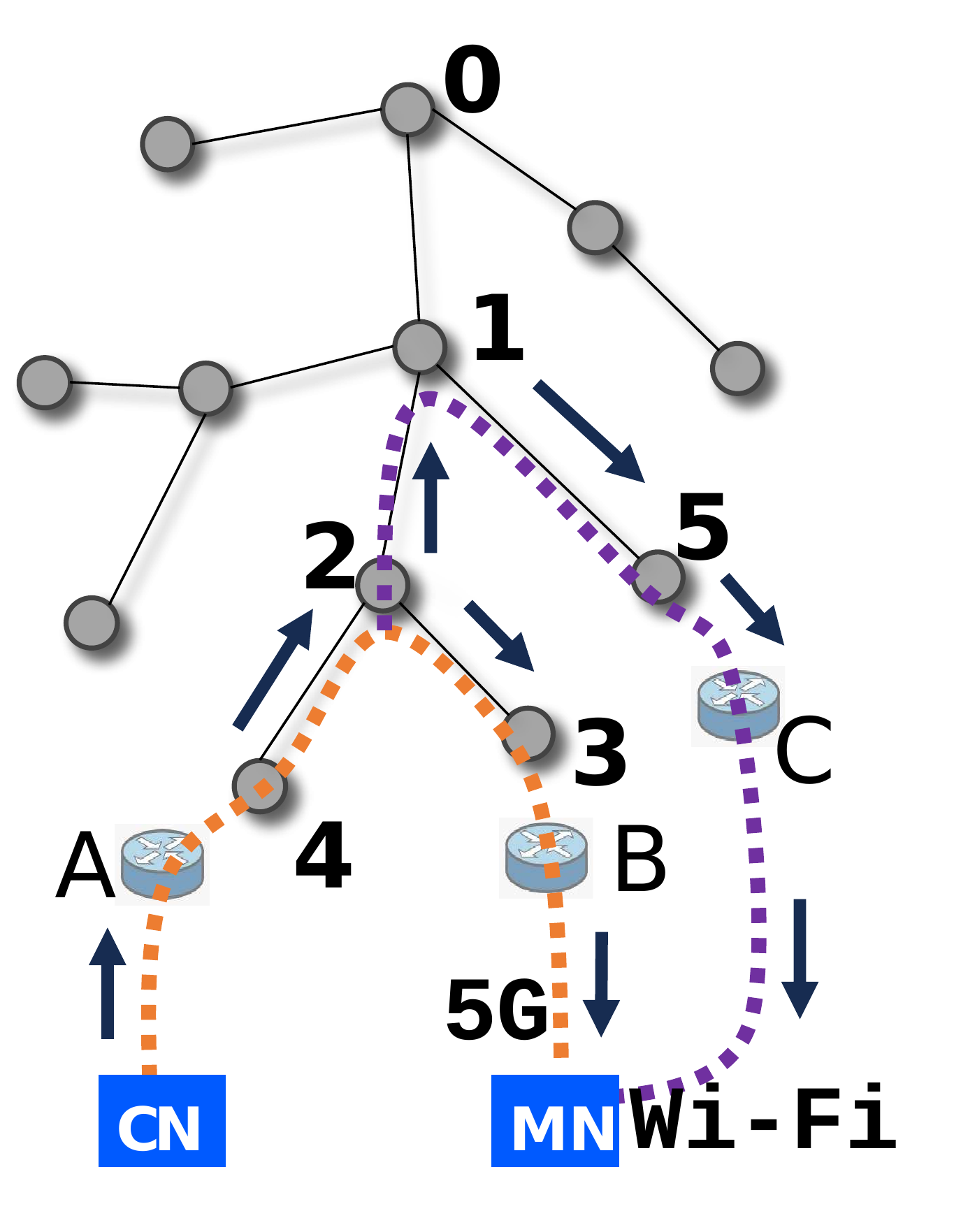}
        \caption{CSR Forwarding to Multi-access}
        \label{fig:LLP-tree-basic:B}
    \end{subfigure}
    \rule{0.1pt}{4.cm} 
    \begin{subfigure}{0.23\columnwidth}
        \includegraphics[width=\textwidth]{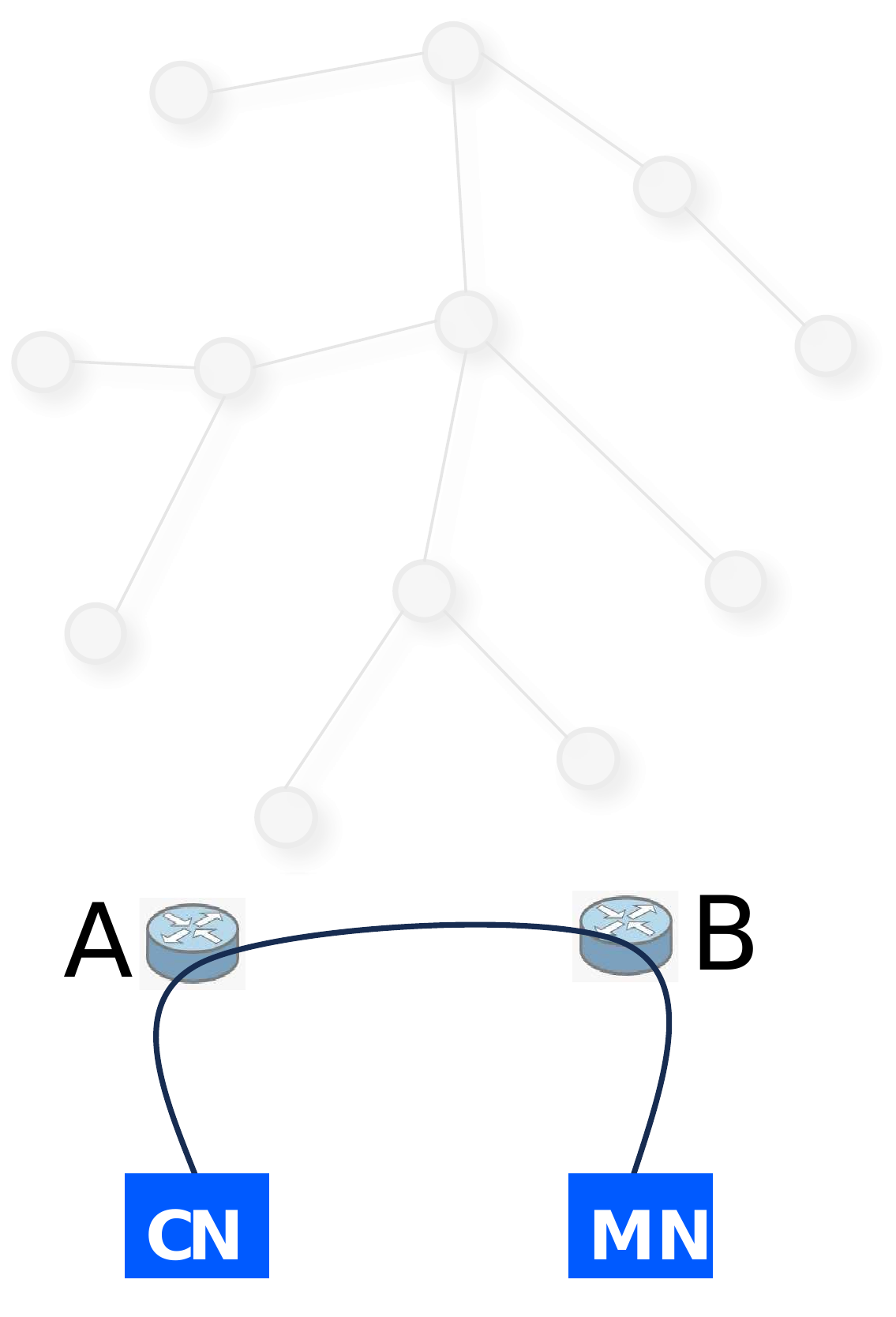}
        \caption{Peer-to-Peer Connection}
        \label{fig:LLP-tree-basic:C}
    \end{subfigure}
    \vspace{.5em}
    \caption{CSR Forwarding}
  \label{fig:LLP-tree-basic}%
\vspace{-1em}
\end{figure}

\label{sec:procedures}
Using physical IP addresses (PIPs) that are assigned by the access links as identifiers for mobile devices would bring difficulties in maintaining accessibility and session continuity, especially for multi-access devices with multiple independent PIPs for each access network (5G, Wi-Fi, Ethernet, etc.).
Inspired by the concept of Locator/ID separation, we assign each Mobile Node (MN) a globally unique IP address (GIP), which serves as its fixed identifier, despite its location.
We maintain the mapping of an MN's GIP to each of its PIPs for its different accesses.
This guarantees that connection setup requests (CSRs) addressed to its GIP can be delivered through any of its accesses when requested.

The mapping from a GIP to its PIPs is maintained by a locator lookup (LLP) overlay deployed at a sufficiently large number of selected Point-of-Presences (PoPs) across the IP backbone network.
This allows each Corresponding Node (CN) or Mobile Node (MN) to have access to a nearby LLP node.
The LLP overlay maintains the GIP to PIP mapping for each MN, and supports delivery of initial connection setup requests (CSR) from CN to MN (Figure~\ref{fig:LLP-tree-basic:A}).
For multi-access MNs, the LLP overlay ensures that the CSR is forwarded to the MN on each of its registered access networks (Figure~\ref{fig:LLP-tree-basic:B}).
Note that typically, only the CSR is forwarded through the overlay; all subsequent communication between the CN and MN use the direct path between them (Figure~\ref{fig:LLP-tree-basic:C}).

Our goal is to design an efficient LLP overlay to support location update and CSR forwarding procedures, assuring:
i) only a small number of LLP nodes maintain the mapping for each GIP,
ii)  only a small number of LLP nodes (near the MN) require updating when an MN's GIP to PIP mapping changes due to localized mobility,
iii) locator lookup and CSR forwarding can take place simultaneously, and
iv) CSRs follow optimized paths of low latency, that are only slightly longer than the direct least  latency path.

The design of LLP overlay is the main focus of this work.
The overlay is organized in an optimized {\em tree} structure.
Its leaf nodes  store GIP to PIP mapping entries for nearby mobiles, while its internal nodes maintain entries that facilitate  forwarding of messages to leaf nodes, where the GIP to PIP mappings are kept.
These entries are consistently updated to reflect the current  location of the mobile device, as explained  below.

\subsection{Location Updates on the LLP overlay}

\begin{figure}[h!]
\vspace{-1.5em}
\centering
    \begin{subfigure}{0.35\columnwidth}
        \includegraphics[width=\textwidth]{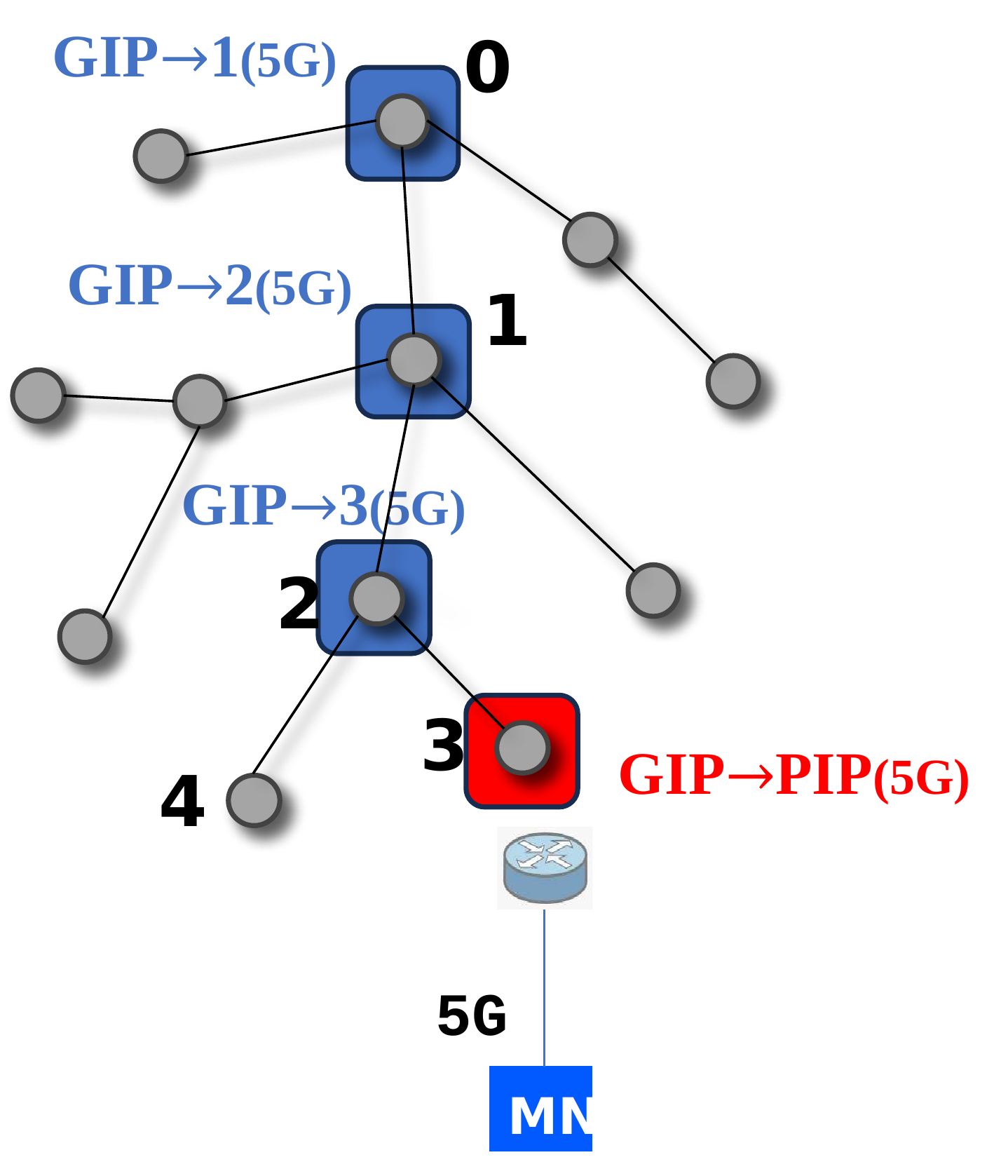}
        \caption{MN at initial location}
        \label{fig:location-tracking:A}
    \end{subfigure}
    \rule{0.1pt}{4cm} 
    \begin{subfigure}{0.3\columnwidth}
        \includegraphics[width=\textwidth]{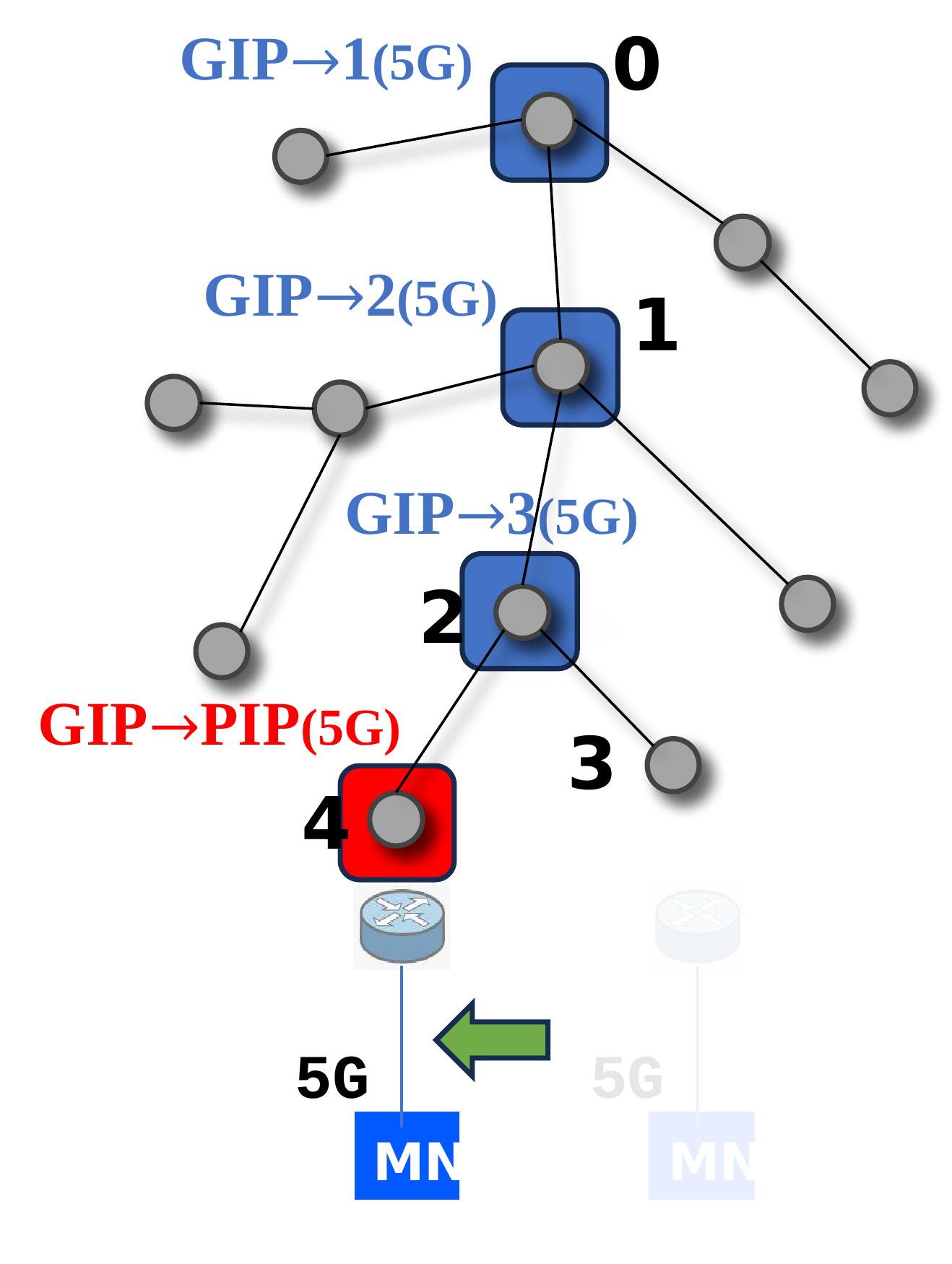}
        \vspace{-2em}
        \caption{MN moves to a new location}
        \label{fig:location-tracking:B}
    \end{subfigure}
    \caption{Location Update Handling  in LLP Tree}
    \vspace{-1em}
  \label{fig:location-tracking}%
\end{figure}

Location update can be triggered due to the MN's registration in a new access network (Figure~\ref{fig:location-tracking:A}) or change in its location or PIP from an already registered access network (Figure~\ref{fig:location-tracking:B}).
A location update contains the MN's GIP and its latest PIP for the access. 
When the nearby LLP leaf node receives the update, it stores the GIP to PIP mapping for the specified access, and propagates the location update up the tree to ensure the GIP entries at the parent nodes are updated as needed, and the outdated entries on the path down to the prior leaf node, if any, are deleted.

Figure~\ref{fig:location-tracking:A} illustrates the initial location registration.
As the location update goes up the tree, the leaf LLP node $3$ stores the GIP to PIP mapping for 5G (red box).
While other nodes $2$, $1$, $0$ only stores its child who knows how to reach the 5G PIP (blue boxes).
For example, the blue box at node $2$, labeled  $GIP\rightarrow 3(5G)$, indicates that the next hop for the GIP for 5G is node $3$.
It is obvious that the entries for this GIP are only maintained at nodes along the tree path from the leaf node $3$ to the root node  $0$. 

Figure~\ref{fig:location-tracking:B} illustrates how GIP to PIP mappings are updated when the MN changes location.
In this example, the MN's 5G access moves from leaf node $3$ to node $4$.
The node $4$ receives the location update, adds a new entry  $GIP\rightarrow PIP(5G)$, and forwards the update to its parent node $2$.
Node $2$ then modifies the corresponding GIP entry from $3$ to $4$, and informs node $3$ to delete the stale GIP entry.
Note that no other nodes are affected by this location update besides nodes $4, 2, 3$. 

To enforce the updates' reliability, the last LLP node at each update path, i.e., the node that has no other node to forward to (e.g., node $0$ in Figure~\ref{fig:location-tracking:A} or node $4$ in Figure~\ref{fig:location-tracking:B}) is responsible for sending an ACK message back to the MN; and re-transmission can be triggered as needed.

The LLP node GIP entry also includes a {\em total access count}, the total number of unique leaf nodes that contain PIP mappings for this GIP.
Its maintenance is done through a slightly different update procedure, omitted due to space constraints.

\subsection{Mapping and Connection Forwarding}

As illustrated in Figure~\ref{fig:LLP-tree-basic:A}, a Corresponding Node (CN), seeking to establish a connection with a Mobile Node (MN) for which it has no GIP-to-PIP mappings, sends the Connection Setup Request (CSR) packet addressed to the MN's GIP. This CSR then reaches a nearby Leaf LLP node $4$ via anycast. In this example, we assume nodes $3,2,1,0$ store the GIP mapping for this MN as shown in Figure~\ref{fig:location-tracking:A}.

The leaf LLP $4$ first checks its local entries for a GIP-to-PIP mapping.
If the Leaf LLP node has at least one entry for the GIP, it can resolve the PIP for each entry, and forwards the CSR to the MN using its PIP for the corresponding access.
If there is no match, or the Leaf LLP node does not have entries for all accesses (determined based on the total access count of this GIP), it sends the CSR  to its parent node for further processing.
In this example, there is no match at node $4$, so the CSR is forwarded to node $2$. 

The parent node $2$ processes the CSR similarly: for each GIP entry it has, it forwards the CSR to its children listed as next hop forwarding nodes in those entries.
If there is no match, or the parent node misses some access entries for the GIP within its subtree, the parent node continues to propagate the CSR up the tree.
In this example, the node $2$ finds one matching pointing to its child node $3$ and the total access count is $1$, hence it only needs to forward CSR to its child $3$. 

When receiving a CSR from the parent node, the child node in turn forwards the CSR to its child, which is the next hop for this GIP.
The process continues until the CSR reaches the leaf node, where the PIP for the GIP is resolved, and the CSR is forwarded to the MN using its PIP for the corresponding access.
In Figure~\ref{fig:LLP-tree-basic:A}, node $3$ resolves the GIP-to-PIP mapping and delivers the CSR to the MN.

Figure~\ref{fig:LLP-tree-basic:B} shows an example where the MN has two different accesses: 5G and Wi-Fi, with their gateways near LLP nodes $3$ and $5$ respectively.
Since the node $2$ lacks GIP entries for Wi-Fi, which are maintained on the path $5,1,0$, it makes a copy of the  CSR and sends it to its parent node $1$.
From there, the CSR eventually reaches node $5$ (purple path), where the GIP-to-PIP resolution for Wi-Fi occurs, and the CSR is forwarded to the MN over Wi-Fi.
As a result, the mobile receives a copy of the CSR on both its 5G and Wi-Fi interfaces, but only needs to accept the first copy.

Once the connection is established, communication occurs directly between the MN and CN, with mobility updates communicated directly from the MN to the CN, bypassing the LLP system (Figure~\ref{fig:LLP-tree-basic:C}).

\subsection{Further discussion}
We summarize the advantages of the LLP based solution below, and also briefly discuss several other design considerations that are not covered in detail due to space limitation.

{\noindent \bf Advantages of LLP approach:}
i) Each MN's location for access is tracked by only a few LLP nodes (equals the depth of the LLP tree, which grows at a logarithmic rate, as we show later).  
ii) As shown in Section~\ref{sec:eval}, the average latency of tree paths,  for forwarding CSRs,  is close to  best possible.
iii) Location updates are processed quickly, as they primarily result in modifications to entries  within a limited subset of close by LLP nodes, particularly in cases of local mobility.

{\noindent \bf Scalability:}
Scaling to millions of end devices can be achieved by partitioning GIPs by prefixes and creating separate overlays per GIP partition, allowing the LLP tree to scale horizontally.

{\noindent \bf Security:}
To ensure the integrity of location updates and avoid connectivity to each MN being hijacked, signatures can be inserted in location update packet headers, e.g., using the HMAC field in SRv6 header. 
No additional overhead is incurred in the tree except for signature processing.
Similar mechanism can be also be used for CSR for added security.

{\noindent \bf Reliability and dynamic conditions:}
LLP nodes can be engineered using high availability clusters to ensure server reliability. Since the tree is an overlay, impact of underlay network dynamics such as network failures and route changes can be minimized by
traffic engineering of the underlay.
\section{LLP tree building algorithm}
\label{sec:design:llplocation}
We now present a Hierarchical Cluster Selection (HCS) {\em LLP overlay  tree} construction algorithm. 
HCS takes a given network of Point of Presence (PoP) sites as input and generates an overlay tree $T$ consisting of PoP sites selected by HCS for LLP node deployment and the least latency direct paths on the underlay network of PoP sites as links.
HCS offers two significant advantages: (i) $T$ has a low average latency increase compared to the underlay network, and (ii) it can effectively limit the number of required LLP nodes.

The HCS algorithm is inspired by a metric tree embedding method~\cite{metric2003fakcha, bartal1998}, but augmented with novel cluster center selection and shortcut-based tree augmentation heuristics. These improvements result in significant performance enhancements in practice, as we show in our evaluations.

Let  $G = (V,E)$ be the fully connected representation of the underlay network of PoP sites, where $V$ is the set of $n$ PoP sites and each undirected link $e = (u,v) \in E$ corresponds to the least latency path between sites $u$ and $v$ in the underlay. $l(u,v)$ denotes the latency of this path.
HCS forms a hierarchical clustering $C$ of the nodes in $V$. The clusters in $C$ have a parent-child relationship, which naturally lends itself to a tree hierarchy. The LLP tree $T$ is constructed from these clusters by selecting a center node for each cluster $c \in C$ and attaching a child cluster to its parent cluster  using a link of $E$ between their center nodes. 

\vspace{-.5em}
\begin{algorithm}
\caption{HCS—Hierarchical Cluster Selection }
   \SetKwProg{Fn}{function}{}{}
    \SetKwFunction{Myalgo}{HCS}

    \Fn{\Myalgo{$S,D,G,lt, \pi$}}{ 
    \If{\(\mbox{Bounded}(S,D,G,lt)\)}{
            \KwRet CreateLeaf(S);
        }
    $T \gets$ \mbox{Root}($S$)\;
    $C \gets$ \mbox{RandomClustering}($S,D/\alpha,G,lt,\pi$)\;
    \ForEach{$c$ \textbf{in} $C$}{
    AttachChild($T$, \Myalgo{\(c,D/\alpha,G,lt,\pi\)})\;
}
\KwRet($T$)
 }
\end{algorithm}
 \vspace{-1em}
HCS utilizes  parameters $S, D, lt, \alpha$, and $\pi$:

\begin{itemize}
\item $S$: The set of cluster nodes, initially set to $V$.
\item $D$: A variable representing the maximum latency used for clustering. It is initialized to the latency diameter of $G$, denoted as $D(G) := \max(l(u, v))$, where $u,v \in V$.
\item $lt$: An acceptable latency threshold. Once the latency diameter of a cluster falls below this threshold, no further clustering is required.
\item $\alpha > 1$: A parameter that controls both the tree depth and its branching, each of which may impact latency. In practice, we have found that a good choice is $\alpha = 2$.
\item $\pi$: A random ordering of the $n= |V|$ nodes.
\end{itemize}

HCS  just returns a single leaf cluster {\em {CreateLeaf($S$)}},  if the diameter of the latency graph of the nodes in $S$ is at most the latency threshold $lt$.  This diameter check is performed by  {\em Bounded} function in line $2$. Otherwise, it  randomly partitions nodes of $S$, into a set of clusters $C$. This partitioning is performed by the {\em RandomClustering} function in line $5$ that operates with a cluster latency threshold $CLT$ (its second argument, which is set to $D /\alpha$ in line $5$). 

 {\em RandomClustering}   selects the first node $v_1$ from the randomly ordered list $\pi$ of nodes, and gathers all the nodes in $S$ that are reachable within the given latency threshold $CLT$ latency from $v_1$. It marks  the gathered nodes as “resolved'' and groups them into a single cluster. It then moves on to the next node $v_2$ of $\pi$ and performs the same gathering and grouping operation on the remaining unresolved nodes in $S$ to form the second cluster. This process is repeated until all nodes in $S$ have been assigned to a cluster. To form the final clusters, it assigns a center to each cluster by selecting a “central” node that has the least total latency to all other nodes in the cluster.  Finally, it returns the set of all these clusters (denoted  $C$ in line $5$).

In lines $6-7$, the HCS function is recursively called to perform hierarchical clustering on the nodes within each cluster $c$ in $C$, but this time with a latency threshold of  $CLT/\alpha$. The resulting set of hierarchical clusters becomes the children of the root cluster formed by the nodes in $S$. This is accomplished by the {\em AttachChild} function, which adds the set of edges $(r, r_i)$ of $E$ that connect the center $r$ of the root cluster with the center $r_i$ of the $i$-th child cluster. This process results in a tree of hierarchical clusters. The centers of these clusters and the added edges constitute the nodes and edges of the LLP tree $T$, which is returned in line $8$.

\begin{figure}[h]
\vspace{-1em}
  \centering
  \begin{subfigure}{0.37\columnwidth}
    \centering
    \includegraphics[width=.8\linewidth]{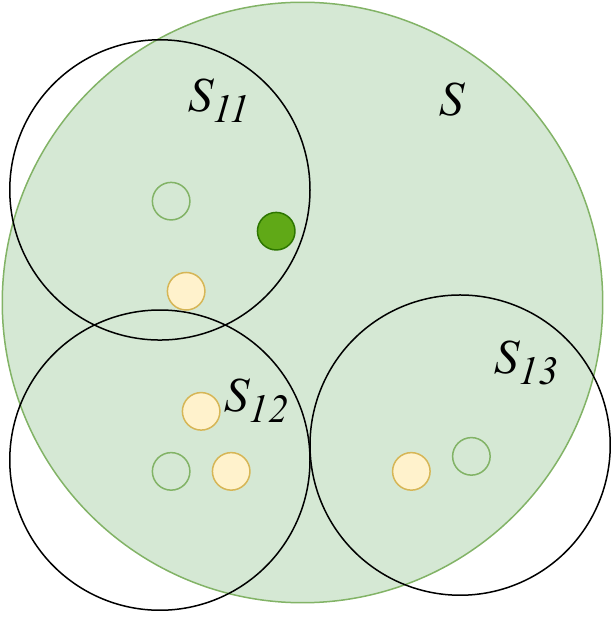}
    \caption{}
    \label{fig:tree:cover}
  \end{subfigure}%
  \begin{subfigure}{0.57\columnwidth}
    \centering
    \includegraphics[width=.8\linewidth]{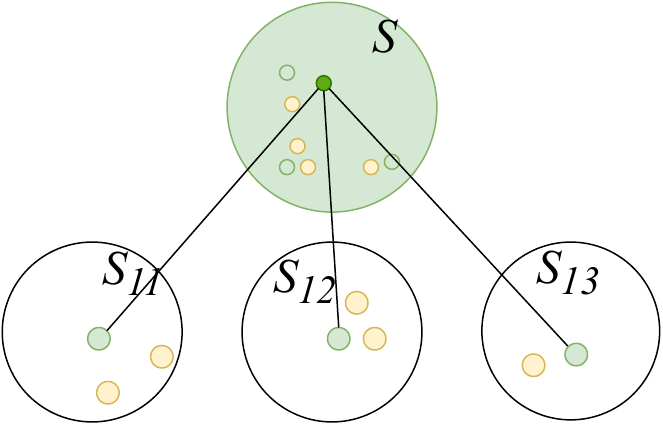}
    \caption{}
    \label{fig:tree:dia}
  \end{subfigure}
  \caption{Tree Building Example}
  \label{fig:tree}
\vspace{-1.em}
\end{figure}

Figure \ref{fig:tree} provides an example of the tree-building algorithm. 
Figure \ref{fig:tree:cover} shows the distribution of the nodes in space and how they are hierarchically clustered for $\alpha=2$. Initially, the set $S$, which is a cluster of all nodes, has a very large diameter and is therefore further partitioned using  {\em RandomClustering}  which yields the $3$ clusters $S_{11}, S_{12}, S_{13}$. When HCS is recursively called on these clusters, {\em Bounded}  returns a leaf cluster for each of them, since their latency diameters are less than the latency threshold $lt$. This results in the tree of clusters shown in  Figure \ref{fig:tree:dia}, with one root cluster for $S$ and these $3$  clusters as leaves.
In these figures, the green  nodes are cluster centers.

\subsection{Analysis}

\begin{definition}
\label{inflation}
{\bf Latency inflation} $LI(u,v)$ for nodes $u$ and $v$:  A measure of the increase in latency from node $u$ to node $v$ for going through $T$ versus taking a direct connecting path.
\end{definition}

Let $l_T(u,v)$ and $l_G(u,v)$ be the latencies of the best paths from $u$ to $v$ in $T$ and $G$, respectively. Then:
\begin{equation}
    LI(u,v)=\left[ \frac{l_T(u,v)}{l_G(u,v)} -1 \right] 
\end{equation}

$LI(u,v)$ is non-negative and ideally should be  close to $0$.

In the following, we take $\alpha=2$.
\begin{claim}
\label{claim:height}
The depth $H$ of $T$ is $ O( \log_2 {\frac{D(G)}{lt}})$.
\end{claim}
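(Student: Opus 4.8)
The plan is to track how the latency threshold parameter shrinks with recursion depth and bound the depth at which the recursion terminates. At the top-level call, HCS is invoked with $D = D(G)$, the latency diameter of the whole graph. In each recursive call (lines 6--7), the threshold passed to the child is divided by $\alpha = 2$, so at recursion depth $k$ the relevant latency parameter is $D(G)/2^k$. The recursion stops (line 2--3, the \emph{Bounded} check) once the cluster's latency diameter drops to at most $lt$.

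\textbf{Key steps.} First I would establish the invariant that at recursion depth $k$, every cluster $S$ handled by HCS has latency diameter at most (a constant times) $D(G)/2^k$ --- more precisely, that \emph{RandomClustering} with cluster latency threshold $CLT$ produces clusters whose latency diameter is at most $2\,CLT$ (since every node in a cluster is within $CLT$ of the cluster's seed node $v_i$, so any two nodes in the cluster are within $2\,CLT$ of each other by the triangle inequality on $l$). Hence after the call at depth $k$ with parameter $D = D(G)/2^k$, the clusters passed down have diameter at most $2 D(G)/2^{k} = D(G)/2^{k-1}$; feeding forward, at depth $k$ the cluster diameter is $O(D(G)/2^{k})$. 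Second, I would observe that once $D(G)/2^{k} \le lt$, i.e. once $k \ge \log_2(D(G)/lt)$, the \emph{Bounded} test succeeds and \emph{CreateLeaf} returns, terminating that branch. Third, since the depth $H$ of the LLP tree $T$ equals the maximum recursion depth (each recursive level contributes one level of cluster centers joined by an edge via \emph{AttachChild}), we get $H = O(\log_2(D(G)/lt))$.

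\textbf{Main obstacle.} The subtle point is the constant factor slippage between ``cluster latency threshold $CLT$'' and ``actual cluster diameter,'' which is a factor of $2$ per level from the triangle-inequality argument above. One must check this does not compound multiplicatively across levels --- it does not, because each level's \emph{RandomClustering} uses the threshold $D/\alpha$ relative to the \emph{current} cluster, not relative to the already-inflated diameter, so the factor-$2$ slippage is absorbed into the $O(\cdot)$ and simply shifts the index by a constant. A second minor point to handle cleanly: \emph{RandomClustering} gathers nodes within $CLT$ of the seed $v_i$ measured by $l(v_i,\cdot)$ in $G$, and these distances are the ones that control the diameter bound, so the argument should be phrased in terms of $l_G$ distances throughout and only at the end related back to the threshold halving. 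Modulo these bookkeeping details, the claim is a direct geometric-series / recursion-depth count.
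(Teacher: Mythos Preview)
Your proposal is correct and follows essentially the same argument as the paper: track $CLT = D(G)/2^i$ at recursion depth $i$, use the triangle inequality on the seed node to bound each cluster's diameter by $2\,CLT$, and conclude that the \emph{Bounded} check fires once $2\,CLT \le lt$, i.e.\ at depth $O(\log_2(D(G)/lt))$. Your discussion of the factor-$2$ slippage is more careful than the paper's (which simply writes the diameter bound as $\min\{D(G),2\,CLT\}$ and moves on), but the substance is identical.
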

\begin{proof}

Note that with every recursive call, the cluster latency threshold   $CLT$ goes down by a factor $\alpha=2$. So, at the $i$-th recursive call $CLT = D(G)/2^{i}$. Note also that at the $i$-th recursive call,  two nodes $u$ and $v$ are grouped together into one cluster $c$  only if their  latency to some common node $w$ (in $\pi$) is within this CLT value. By triangle equality (since  latencies in the fully connected graph $G$ form a distance metric), it follows  that latency between $u$ and $v$, and hence latency diameter of $c$, is at most $\min \{ D(G), 2CLT \}$.  
When $i = \log_2 {\frac{D(G)}{lt}}+ 1$, $2CLT \le lt$ and hence in the next recursion, $c$ is marked as a leaf cluster and the recursion stops. Since the recursion depth is a measure of the tree depth, the result follows.
\end{proof}

\begin{claim}
Latency of level $i$ tree edges  is at most $lt 2^ i $.
\label{latency-level}
\end{claim}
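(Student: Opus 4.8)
The plan is to bound the latency of a tree edge by relating it to the cluster latency threshold $CLT$ in force when that edge is created. Recall from the HCS algorithm that at the top-level call $CLT$ starts at $D(G)$ and is divided by $\alpha = 2$ with each recursive descent, so at recursion depth $j$ (counting the root call as depth $0$) we have $CLT = D(G)/2^j$. An edge connecting a cluster $c$ to one of its child clusters $c_i$ is added by \emph{AttachChild} as the link $(r, r_i)$ between the center $r$ of $c$ and the center $r_i$ of $c_i$. The key observation is that $r_i$ lies inside $c$, and $c$ was formed (by \emph{RandomClustering}, one level up) using a cluster latency threshold equal to whatever $CLT$ was active in the call that produced $c$ — call it $CLT_c$. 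Hence $r$ and $r_i$ are both members of $c$, and by the diameter bound already established in the proof of Claim~\ref{claim:height}, the latency diameter of $c$ is at most $\min\{D(G), 2\,CLT_c\}$. Therefore $l(r, r_i) \le 2\,CLT_c$.

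Next I would pin down the indexing. The statement refers to ``level $i$ tree edges,'' and one must be careful that the paper's convention for ``level'' matches the recursion depth used above. The natural reading, consistent with the later claims, is to count levels from the leaves upward, so that level-$1$ edges are the ones incident to leaf clusters, level-$2$ edges are one step higher, and so on; equivalently, a level-$i$ edge is created at a recursion where the \emph{child} cluster it points to is marked as a leaf after roughly $i$ more halvings, i.e.\ where $CLT_c \le lt\,2^{i-1}$ (and $CLT_c > lt\,2^{i-2}$, though only the upper bound matters). Plugging this into $l(r,r_i) \le 2\,CLT_c$ gives $l(r, r_i) \le 2 \cdot lt\,2^{i-1} = lt\,2^i$, which is exactly the claimed bound. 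I would write this out by first stating the depth-to-$CLT$ correspondence as a one-line recap of Claim~\ref{claim:height}'s proof, then noting that both endpoints of a level-$i$ edge lie in the common parent cluster $c$ whose diameter is controlled, and finally doing the arithmetic.

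The main obstacle I anticipate is not any deep mathematics but rather reconciling the ``level'' bookkeeping with the description in the text: the algorithm is written top-down (threshold shrinking from $D(G)$), while the claim is phrased bottom-up (latency growing as $lt\,2^i$), so I must be explicit about which recursive call creates a ``level $i$'' edge and verify the off-by-one in the exponent is correct against the base case. Concretely, at the deepest non-leaf recursion, the active $CLT$ satisfies $lt/2 < CLT \le lt$ (otherwise \emph{Bounded} would already have fired, using the diameter-$\le 2\,CLT$ bound), so the shortest tree edges — the level-$1$ edges attaching leaves — have latency at most $2\,CLT \le 2\,lt = lt\,2^1$, matching $i=1$. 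Each step up the tree doubles the governing $CLT$ and hence doubles this bound, yielding $lt\,2^i$ at level $i$ by a trivial induction. A secondary minor point worth a sentence is invoking the triangle inequality on $G$'s metric to justify ``diameter of $c \le 2\,CLT_c$'' — this is the same argument as in Claim~\ref{claim:height}, so I would just reference it rather than repeat it. With those two bookkeeping points nailed down, the proof is two or three lines.
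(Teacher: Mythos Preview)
Your proposal is correct and follows essentially the same approach as the paper: both argue that a level-$i$ edge joins two nodes lying in a common cluster at recursion depth $H-i$, whose latency diameter is bounded by $lt\,2^i$ via the $2\,CLT$ bound from Claim~\ref{claim:height}. Your write-up is actually more careful than the paper's two-line proof about the top-down/bottom-up index conversion and the base case, which is helpful since the paper leaves that bookkeeping implicit.
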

\begin{proof}
From proof of   claim~\ref{claim:height} it follows that at recursion depth  $H -k$, the latency diameter (LD) of the cluster $S$ is below $lt 2 ^k$. Since a level $k$ edge  is added between two nodes that both belong to the same cluster $S$ at recursion depth $H-k$, its latency is at most  $lt 2^k$ and the result follows.
\end{proof}

The following result follows from above result and  result of ~\cite{metric2003fakcha, bartal1998} (proof  omitted).
\begin{claim}
\label{claim:inflation}
The average latency inflation of $T$ is  $O(\log n)$.
\end{claim}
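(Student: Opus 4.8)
The plan is to establish the $O(\log n)$ bound on the average latency inflation by adapting the classical probabilistic tree embedding argument of~\cite{metric2003fakcha, bartal1998} to the specific structure of the HCS tree. The key quantity to control is, for a fixed pair $u,v$, the expected value of $l_T(u,v)$ over the random ordering $\pi$, which we compare against $l_G(u,v)$. The first step is to identify the level at which $u$ and $v$ are separated in the hierarchical clustering: let $i^*$ be the largest recursion depth at which $u$ and $v$ still lie in a common cluster. By Claim~\ref{latency-level}, any tree edge on the $u$-to-$v$ path in $T$ at level $k \le i^*$ has latency at most $lt\,2^k$, and there are at most two such edges per level (one on the way up from $u$, one on the way down to $v$); summing the geometric series gives $l_T(u,v) = O(lt\,2^{i^*})$, so the whole problem reduces to bounding $\mathbb{E}[2^{i^*}]$ (equivalently, the probability that $u$ and $v$ are split at each level).

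The second step is the heart of the matter: bounding, for each level $i$ with cluster latency threshold $CLT_i = D(G)/2^i$, the probability that $u$ and $v$ are first separated at that level. This is exactly the ``padded decomposition'' / low-diameter decomposition estimate underlying the Fakcharoenphol–Rao–Talwar bound. The standard argument considers the nodes $w$ in the random order $\pi$ and asks which $w$ is the first (in $\pi$) to ``settle'' at least one of $u,v$ into its cluster; $u$ and $v$ end up separated only if that $w$ captures exactly one of them, i.e. only if $w$'s distance to $u$ and to $v$ straddle the radius $CLT_i$. Charging this event over the balls of geometrically increasing radii and using a union bound over the ordering yields a separation probability of $O\!\big(\tfrac{l_G(u,v)}{CLT_i}\cdot \log(\text{number of nodes within distance }CLT_i)\big)$, hence $O\!\big(\tfrac{l_G(u,v)}{CLT_i}\log n\big)$. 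Plugging this into $\mathbb{E}[2^{i^*}] = \sum_i 2^i \Pr[u,v \text{ separated at level } i]$ and noting $2^i / CLT_i = 2^{2i}/D(G)$ — wait, more carefully, the contribution of level $i$ to $\mathbb{E}[l_T(u,v)]$ is $O(lt\,2^i) \cdot O\!\big(\tfrac{l_G(u,v)}{D(G)/2^i}\log n\big)$; the factors $2^i$ appear to blow up, so the correct accounting (as in FRT) telescopes by pairing $lt\,2^i$ against the decomposition at the \emph{coarser} scale so that one instead gets $O(l_G(u,v)\log n)$ after summing, since the relevant scales are those with $CLT_i$ comparable to $l_G(u,v)$ and the geometric decay dominates. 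The clean way to phrase it: condition on $i^*$, use that $l_T(u,v) = O(CLT_{i^*})$ and $\Pr[i^* = i] = O\!\big(\tfrac{l_G(u,v)}{CLT_i}\log n\big)$, and sum $\sum_i CLT_i \cdot \tfrac{l_G(u,v)}{CLT_i}\log n$ restricted to the $O(\log n)$ relevant levels — which is where the $\log n$ factor ultimately comes from, together with the decomposition's internal $\log n$ being absorbed; a sharper bookkeeping gives a single $\log n$.

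The third step is to pass from the per-pair bound to the \emph{average} latency inflation: since $\mathbb{E}_\pi[l_T(u,v)] \le O(\log n)\, l_G(u,v)$ for every pair, linearity of expectation over all $\binom{n}{2}$ pairs gives $\mathbb{E}_\pi\big[\frac{1}{\binom{n}{2}}\sum_{u,v} LI(u,v)\big] \le O(\log n)$, so there exists an ordering $\pi$ achieving average inflation $O(\log n)$; this is the sense in which HCS attains the bound. I would remark that the HCS center-selection and the $\alpha=2$ scaling only affect constants, and that the shortcut augmentation can only decrease $l_T$, so it does not hurt the bound.

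The main obstacle I anticipate is the level-by-level probability estimate in the second step: getting the low-diameter-decomposition bound to apply verbatim requires checking that HCS's \texttt{RandomClustering} — which greedily peels off balls of radius $CLT_i$ in the order $\pi$ and is \emph{not} the randomly-shifted-radius variant used in FRT — still yields the $O\!\big(\tfrac{d(u,v)}{CLT_i}\log k\big)$ separation probability. In fact the fixed-radius greedy version needs either a randomized radius or an extra argument; I would either cite the fixed-radius analysis directly (as the paper does, deferring to~\cite{metric2003fakcha,bartal1998}) or note that replacing the deterministic $CLT_i$ by a radius drawn uniformly from $[CLT_i/2, CLT_i]$ changes nothing elsewhere in HCS while making the standard bound go through. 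Since the excerpt explicitly says ``proof omitted'' and leans on the cited embeddings, the intended writeup is essentially: combine Claims~\ref{claim:height} and~\ref{latency-level} to localize the tree-path cost, then invoke the FRT separation-probability lemma as a black box and sum.
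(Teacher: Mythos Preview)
Your proposal is correct and matches the paper's approach exactly: the paper itself omits the proof, stating only that the claim follows from Claims~\ref{claim:height} and~\ref{latency-level} together with the tree-embedding results of~\cite{metric2003fakcha, bartal1998}. Your sketch spells out precisely this cited argument (bound $l_T(u,v)$ via the separation level, invoke the FRT separation-probability lemma, sum), and your caveat about fixed versus randomized radii in \texttt{RandomClustering} is a legitimate technical point that the paper does not address.
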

\noindent where $n$ is the number of PoP sites.

HCS  determines the placement of LLP nodes at the central nodes of clusters, as this method usually results in lower latency. However, there may be other cluster center choices that can further reduce latency. To explore these alternatives, we  consider heuristics in the next section that select cluster centers based on criteria beyond mere closeness to other nodes within the cluster.

\subsection{Heuristics for cluster center selection}
\label{improve-tree}

\subsubsection{Selecting cluster centers on the basis of both inter and intra cluster latencies}
It is crucial for cluster centers to not only be  close  to  other nodes within the cluster but also to their parent cluster. This is because all paths between nodes within a cluster and those outside of it utilize the edge connecting the cluster's center to its parent. We  use this insight to improve the choice of cluster centers. Specifically, we  select the node of the cluster to serve as its center by minimizing the following quantity:

 \begin{equation}
    \arg \min_{i} \frac{  \sum_{j \in c, j \ne i}   {w_{j} l_{G}(i,j) }   }    { \sum_{j \in c}   w_{j} } + l_G(i,p)
\end{equation}. 

Here, $c$ denotes a cluster, $p$ is the center of its parent cluster, and $w_j$ is the weight for node $j$. These node weights are assigned based on the traffic or population density at the PoP site representing that node. The first sum represents the weighted average of the intra-cluster latencies for the choice of node $i$ as the cluster center, while the second term is simply the latency to the parent LLP node for that choice. By accounting for both the inside and outside cluster latencies for the LLP tree, this method yields a better choice for cluster centers.

This method of picking centers can be applied level by level, starting from the root and moving down to the leaves of the tree of clusters (Figure~\ref{fig:tree:dia}). This way, the center of the parent cluster is already selected and does not change after the method is applied to a child cluster.

\subsubsection{Stretching Detours}

The paths in the overlay LLP tree when mapped to the underlay may contain detours.

To address this issue, we design a heuristic  that makes adjustments to tree centers to eliminate unnecessary detours and as a consequence 
further improves the latency of tree paths.
It repeatedly picks any two clusters $c_1$ and $c_2$, where $c_2$ is a parent of $c_1$ in the tree, such that  the lowest latency underlay path from $c_1$'s center   $n_1$   to $c_2$'s center   $n_2$ includes  $n_3$, another node in $c_1$, which is itself a center of another cluster $c_3$.
It switches the center of cluster $c_1$ from    $n_1$ to  $n_3$, thus eliminating this detour. After this switch, the HCS algorithm is invoked to re-build the subtree under cluster $c_1$ and the process continues.

\subsection{Shortcut based Tree Augmentation}
\label{sec:shortcut}

As we have shown,  the average latency inflation of the overlay LLP tree constructed by HCS is small  (see Claim~\ref{claim:height},~\ref{claim:inflation}). Now we show how to achieve low worst case latency inflation using the LLP tree by augmenting it with a few “shortcuts”, to bypass its high latency paths.

\begin{wrapfigure}{r}{0.35\columnwidth}
\vspace{-1.5em}
\centering
  \includegraphics[clip, width=\linewidth]{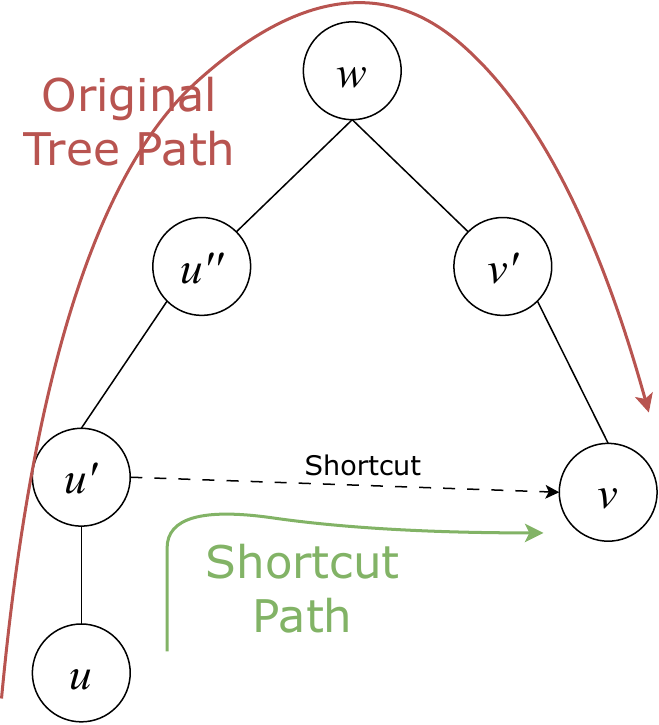}
  \caption{LLP Tree Shortcut Example}
  \label{fig:shortcut_example}
  \vspace{-1em}
\end{wrapfigure}

Let   $P_T(u,v)$, the path between nodes $u$ and $v$ in  LLP tree $T$, have high latency inflation. Let $u'$ be a node on the path $P_T(u,v)$, starting at $u$ but before reaching the common ancestor $w$ of $u$ and $v$ in $T$.  A shortcut $(u’, v)$, creates a new tree path from node $u$ to node $v$,  that deviates from path $P_T(u,v)$ by following the shortcut at node $u'$  to directly get to node $v$ (Figure~\ref{fig:shortcut_example}).  And as the shortcut $(u’, v)$ uses the least latency underlay path between nodes $u’$ and node $v$, this new tree path from $u$ to $v$ can have much lower latency inflation than the original path $P_T(u,v)$. Note that we can always find such $u’$, since for $u'$ = $u$, the latency inflation from $u$ to $v$ drops to zero.

A shortcut $(u’, v)$ can be implemented as follows. 
The node $u'$ maintains forwarding entries to $v$ for the GIPs located at $v$. Packets forwarded on the tree to these GIPs, upon reaching the node $u’$, get directly forwarded to the node $v$ using the shortcut $(u’, v)$ . Although shortcuts help lower latency inflation, they increase the amount of forwarding state that needs to be maintained among the tree nodes (i.e., at the node $u’$  in the example above). Also, as forwarding state for a GIP 
may get replicated across many more nodes, there may be increased cost of updating node entries when a GIP changes location. 
We present below a solution that strikes a balance between the use of the least latency paths and the resulting cost of tracking mobile endpoints.

Let  $d$ and $D$ denote the smallest and largest latencies in $G$   respectively. We partition the interval $[d, D]$ into  $k$ contiguous latency ranges $\{ l_i = [d_{i-1}, d_i], 1 \le i \le k \}$, where $d_0 = d$ and $d_k = D$. We use latency inflation bound  $\epsilon_i$ for latency range $l_i$, with $ \epsilon_1 \le \epsilon_2 \le \ldots \le \epsilon_k$. Formally, this imposes the constraint  $LI(u,v) \le \epsilon_i$ (see Definition~\ref{inflation}), for any node pair $u, v$ whose direct path underlay latency $l(u,v)$  is in the range $l_i$.

{\bf Shortcut heuristic}: The shortcut heuristic algorithm starts with a tree $T$ and greedily augments it with shortcuts. It iterates over the latency ranges $l_1, l_2 , \ldots l_k$, in that order, starting from range $l_1$. For latency range $l_i$, it finds a pair of nodes $u, v$, with $l(u,v)$ in the range $l_i$, for which  latency inflation in the currently augmented tree is more than  $\epsilon_i$. It finds the “highest'' node (i.e., closest to the root of the tree) $u'$ of the type described earlier for which   shortcut $(u', v)$ brings the $u$ to $v$ latency inflation below $\epsilon_i$. 
The shortcut heuristic adds the shortcut $(u', v)$. It  repeats this step until there is no node pair $u, v$, with $l(u,v)$ in the range $l_i$,  for which the tree latency inflation is still greater than $\epsilon_i$. The heuristic then applies this procedure for latency range $l_{i+1}$ and so on until all  latency ranges have been covered.  

We now describe how CSR forwarding, location updates are handled in the shortcut augmented tree. We mainly focus on the modifications that specifically apply to the augmented tree $T$.

\subsubsection{CSR forwarding}
The only change is that when a   CSR, while it is being propagated up the tree, reaches a node $u$ that maintains a shortcut to leaf node $v$ for its GIP, then the node $u$ directly forwards the CSR to the node $v$, bypassing the rest of the tree.

\subsubsection{Location Updates} 
Each  leaf node $v$ of the   augmented tree $T$  maintains a set, $S_v$, which is the set of nodes of $T$ with a shortcut to $v$. That is  $u' \in S_v$, if shortcut $(u', v)$ is in $T$. 

When the node $v'$ receives the location update for the device  it uses the tree to forward it, as if it were a CSR, using  the destination GIP to the previous leaf node $v$. This message includes the set $S_{v'}$. When node $v$ receives this message, it sends an update message to all nodes in $S_v$ asking them to withdraw their current forwarding entry for this GIP that points to node $v$, while also sending a message to all nodes in $S_{v'}$ asking them to add a forwarding entry for this GIP that points to node $v'$. This ensures that the nodes' shortcut entries get updated to reflect the new location of the device.

\subsubsection{Analysis}
Note that by  design, the algorithm ensures worst case latency inflation bounds for the paths of the augmented tree. We now bound the expected number of  forwarding entries needed in $T$ for the shortcuts.

Recall  $d$ and $D$ denote the smallest and largest latencies in $G$   respectively. For the following result, we consider  latency ranges whose  boundaries are powers of two. Specifically    $\{ l_i = [d_{i-1}, d_i], 1 \le i \le k \}$, for $d_i = d 2^i, 0\le i \le k$. Note that since $d_k = D$, the number of different ranges $k$ is $O(\log \frac{D}{d})$. 

\begin{claim}
Let $n_i^v$ be the number of nodes $u$ such that $l(u,v)$ is in the range $l_i$. 
The expected number of shortcuts to $v$ and hence the number of forwarding entries for a GIP  located at a leaf node $v$  is at most $\sum_{i=1}^k \min \{ \frac{16n_i^v}{1+\epsilon_i} O(\log n), n_i^v \} $. 
\end{claim}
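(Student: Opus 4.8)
The plan is to bound, for each latency range $l_i$ separately, the expected number of shortcuts of the form $(u', v)$ that the greedy shortcut heuristic adds on behalf of node pairs $u,v$ with $l(u,v) \in l_i$, and then sum over $i$. The trivial bound is $n_i^v$, since there are only $n_i^v$ candidate nodes $u$ with $l(u,v)$ in range $l_i$, and each such $u$ can be responsible for at most one shortcut to $v$; this gives the second term in the $\min$. The work is in establishing the first term, $\frac{16 n_i^v}{1+\epsilon_i} O(\log n)$.

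First I would fix the range $l_i$ and the leaf $v$, and consider the moment in the algorithm when it has finished processing ranges $l_1,\dots,l_{i-1}$ and is about to process $l_i$. A shortcut $(u', v)$ gets added for a pair $(u,v)$ with $l(u,v) \in l_i = [d2^{i-1}, d2^i]$ only because the current tree-path latency from $u$ to $v$ exceeds $(1+\epsilon_i)\, l(u,v) \ge (1+\epsilon_i)\, d\,2^{i-1}$. The key observation is a charging/packing argument: once a shortcut $(u',v)$ is added, every node $u$ lying on the tree between $u'$ and $v$'s side now reaches $v$ with latency at most the shortcut latency $l(u',v)$ plus the (small) tree latency from $u$ down to $u'$; in particular, a single shortcut "fixes" all the pairs $(u,v)$ whose $u$ hangs off the relevant portion of the tree below $u'$. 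So I would argue that the shortcuts added for range $l_i$ are, in effect, indexed by a set of nodes $u$ that are pairwise "far apart" in the tree metric — roughly, any two such $u$'s that were both forced to get their own shortcut must have tree-distance to each other on the order of $(1+\epsilon_i)\, d\, 2^{i-1}$, because otherwise the later one would already have been fixed by the earlier one's shortcut. This is where Claim~\ref{latency-level} enters: level-$j$ tree edges have latency at most $lt\, 2^j$, so the tree geometry is controlled, and I can convert a lower bound on pairwise tree-distance into an upper bound on how many such $u$ can fit among the $n_i^v$ candidates. A ball-packing estimate then caps the count by something like $\frac{n_i^v}{1+\epsilon_i}$ up to constant factors, and the extra $O(\log n)$ factor comes from the average-case stretch guarantee of the underlying metric tree embedding (Claim~\ref{claim:inflation} and \cite{metric2003fakcha, bartal1998}) — the embedding only promises the stretch bound in expectation over the random ordering $\pi$, so controlling how many pairs simultaneously violate $\epsilon_i$ requires passing through the expected-stretch bound, contributing the logarithmic factor and, presumably, the constant $16$.

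I would then take expectations over the random clustering (the choice of $\pi$), apply linearity across the $k = O(\log \frac{D}{d})$ ranges, and combine with the trivial $n_i^v$ bound to get $\sum_{i=1}^k \min\{\frac{16 n_i^v}{1+\epsilon_i} O(\log n),\; n_i^v\}$. The main obstacle I anticipate is making the "one shortcut fixes many pairs, so the surviving forced-shortcut nodes are a well-separated packing" argument fully rigorous in the tree metric while the tree is simultaneously being mutated by earlier shortcuts — i.e., arguing that processing ranges in increasing order of latency means a shortcut added for range $l_i$ is never later undermined, and that the relevant "fixed" region below $u'$ is large enough (using that the heuristic picks the \emph{highest} such $u'$). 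Tying the separation constant and the embedding's expected-stretch constant together to land exactly on $16$ is the kind of routine-but-fiddly bookkeeping I'd expect to occupy the bulk of the formal write-up.
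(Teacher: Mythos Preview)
Your proposal takes a substantially harder path than the paper, and the packing/charging layer you build is both unnecessary for the stated bound and is where the argument breaks. The paper's proof is purely per-pair and probabilistic, with no interaction between different $u$'s ever analyzed. For a fixed $u$ with $l(u,v)\in l_i$, a shortcut on behalf of $(u,v)$ is triggered only if the \emph{original} HCS tree $T_{\mathrm{orig}}$ already has $l_{T_{\mathrm{orig}}}(u,v)>(1+\epsilon_i)\,l(u,v)\ge (1+\epsilon_i)\,d\,2^{i-1}$, since shortcuts added earlier can only lower tree latency to $v$. By Claim~\ref{latency-level}, this forces the level $j$ of the least common ancestor of $u$ and $v$ in $T_{\mathrm{orig}}$ to satisfy $d\,2^{j+2}\ge(1+\epsilon_i)\,d\,2^{i-1}$, equivalently $\frac{d\,2^{i}}{d\,2^{j-1}}\le\frac{16}{1+\epsilon_i}$. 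The FRT separation lemma~\cite{metric2003fakcha} says the probability (over $\pi$) that the LCA sits at such a level $j$ is at most $\frac{l(u,v)}{d\,2^{j-1}}\,O(\log n)\le\frac{16}{1+\epsilon_i}\,O(\log n)$. Taking the $\min$ with the trivial bound $1$ and summing over the $n_i^v$ candidate nodes $u$ by linearity of expectation gives the claim immediately.

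Your route instead tries to exploit the greedy structure (``one shortcut fixes a neighborhood, so surviving triggers form a well-separated set in the tree metric'') to get a count of roughly $\frac{n_i^v}{1+\epsilon_i}$, and then attach the $O(\log n)$ from the embedding's average stretch. That combination is the gap: the packing count you sketch is a deterministic statement about a \emph{fixed} realized tree, while the $O(\log n)$ is a statement about the \emph{distribution} of trees; multiplying the two is not a valid expectation computation, and there is no reason the packing radius scales with $1+\epsilon_i$ on a tree whose stretch may be arbitrarily bad. The obstacle you flag at the end --- making the ``fixed region below $u'$'' argument rigorous while the tree is being mutated --- is real, and the paper simply sidesteps it by upper-bounding ``shortcut added for $(u,v)$'' by ``$T_{\mathrm{orig}}$ has large stretch on $(u,v)$'' and invoking the FRT level-cutting probability directly (not the averaged stretch of Claim~\ref{claim:inflation}). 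The constant $16$ then drops out of two factors of $2$ from the dyadic range $l_i$ and two more from the geometric edge-latency sum in Claim~\ref{latency-level}, with no bookkeeping about earlier shortcuts at all.
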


\begin{proof}
Let $l(u,v)$ be in the range $l_i$. 
Thus, $d 2^{i-1} \le l(u,v) < d 2^i$. Since a shortcut is added from $u$ to $v$,  $l_{T_{orig}}(u,v) > d 2^{i-1}  (1+\epsilon_i)$. Here $l_{T_{orig}}(u,v)$ is the $u$ to $v$  path latency in the LLP tree $T_{orig}$ constructed by HCS.
Let $w$ denotes the common ancestor of $u$ and $v$ in $T_{orig}$, at level $j$. Here, leaf nodes are at level $0$, their parents are at level $1$ and so on.
From Claim~\ref{latency-level}  it follows that $l_{T_{orig}}(u,v) \le d 2^{j+2}$. Thus, by combining bounds:
\begin{equation}
\vspace{-.75em}
d 2^{i-1}  (1+\epsilon_i) < l_{T_{orig}}(u,v)  <=  d 2^{j+2}
\end{equation}
This implies 
$$\frac{d 2^{i-1}}{ d 2^{j+2}} \le \frac{1}{1+\epsilon_i}  \  \mbox{or} \  \frac{d 2^i}{d2^{j-1}} \le \frac{16}{1+\epsilon_i}.$$
Also from ~\cite{metric2003fakcha} it follows that the probability that $w$ is at level $j$ of $T_{orig}$ is at most $\frac{l(u,v)}{d2^{j-1}} O(\log n)$. Since  $d 2^{i-1} \le l(u,v) < d 2^i$, this probability is at most 
\begin{equation}
 \frac{d 2^i}{d2^{j-1}} O(\log n) \le \min \{ \frac{16}{1+\epsilon_i} O(\log n), 1 \}
\end{equation}
Thus, the probability that the algorithm adds a shortcut from $u$ to $v$ is at most $\min \{ \frac{16}{1+\epsilon_i} O(\log n), 1 \}$. 
 \end{proof}

\section{Implementation}
\label{sec:implementation}
We implement our system using a combination of technologies. The data plane is realized with eBPF~\cite{calavera2019eBPF} by leveraging the tc~\cite{man2023tc_bpf} hook point in the Linux system, and the control plane is developed with bcc~\cite{bcc2023}.
We leverage Segment Routing IPv6 (SRv6) ~\cite{rfc8986} for packet forwarding within our proposed LLP tree structure.

CSR forwarding is a key component of our solution.
It leverages eBPF to intercept packets on virtual interfaces within GIP subnets. 
During the connection setup phase, an SRv6 header is inserted into the CSR packet, directing it through the LLP tree structure. This ensures the packet's destination IP address is ultimately replaced with the destination endpoint's PIP, guaranteeing successful delivery.
Our robust location update mechanism for mobility includes a link monitor component that detects a mobility event and sends update messages to remote hosts using an available interface, with the SRv6 header if needed.
This approach facilitates   seamless interface switching and maintains uninterrupted connections.

Moreover, in our scheme, we utilize the endpoint's own GIP as the destination for update messages.
This ensures these messages traverse the path containing outdated records, forcing them to be updated.
Furthermore, by having the  last LLP node on the path respond with an ACK, our scheme can signal a successful update within the LLP infrastructure. 
In the event that an ACK message is not received, the MN may retransmit the update.

\section{Evaluation}
\label{sec:eval}
We evaluate the performance of our LLP tree-based distributed solution from three key perspectives: latency inflation, memory consumption, location update disruption and its overhead.
We compare our solution with two other approaches: a centralized solution with a single mobility anchor responsible for managing all location updates and routing traffic to and from all mobile devices, and a LISP system with a centralized Map-Server storing all mapping records for all MNs. In our comparison, the centralized anchor or Map-Server is assumed to reside at the center of the LLP tree root node, as the LLP tree root is typically the central node in the topology based on latency.
LISP variants, such as LISP+ALT~\cite{lisp_alt}, LISP-TREE~\cite{lisp_tree}, and LISP-CONS~\cite{lisp_cons} include enhancements to improve the scalability and efficiency of the original LISP.
However, in high mobility scenarios their latency performance  may be worse, which is why we do not  compare our work to these variants.
For instance, LISP+ALT employs the prefix aggregability of BGP to disseminate  EID-prefixes, which can result in prolonged convergence times. LISP-TREE employs a DNS-like structure for  locator lookup, but maintaining a balance between mobility requirements and TTL for caching can be challenging. LISP-CONS works with CDN infrastructure, yet it still experiences long update latencies.

In our experiments, we simulate different solutions using real-world data to create realistic connection setups. We also utilize public Internet traceroute records to derive latency statistics between node pairs for various topologies. The datasets we used are described below:

\subsection{Datasets}
\label{sec:eval:datasets}

\BfPara{Internet topology zoo} 
We leverage the Internet Topology Zoo~\cite{knight2011itz}, a collection of real-world network topologies, to select $33$ topologies in the U.S. with adequate geolocation information for our experiments. Among them, the Arpanet19728 topology covers most of the U.S. geolocation and is used as a representative example of our evaluation results.

\BfPara{U.S. population} 
We use the population distribution to simulate realistic connection establishment events.
We utilize two datasets: one from ArcGIS hub~\cite{arcgis2023usa_counties}, which provides county population information, and another from Kaggle~\cite{kaggle2023counties_coordinates}, which lists the center point of each county.
To determine the endpoints of the connections in our simulation, we randomly select the center of a county based on its population size.

\BfPara{CAIDA Ark Dataset}
To obtain accurate latency measurements in our simulation, we use the  CAIDA Ark Dataset~\cite{ark_ipv4_traceroute}.
We select hops within the CAIDA Ark Dataset that are in proximity to our topologies' nodes.
Subsequently, we employ the traceroute latency of segments from the CAIDA Ark Dataset to derive pairwise latency for node pairs within our topologies. 
This approach allows us to closely mimic real-world latency conditions in our simulation.

\subsection{Latency Inflation of CSR}
In our LLP solution, all packets, except for Connection Setup Requests (CSRs), generally follow the direct least latency routes to reach their destinations, resulting in negligible latency inflation. The latency inflation of CSRs is therefore a crucial performance metric that we aim to assess. 

To evaluate the latency inflation of CSRs, we simulate connection setups from various locations across selected U.S.-based network topologies.
We randomly select two county centers as the connection endpoints and assume that the probability of establishing a connection is inversely proportional to the distance between the endpoints and directly proportional to the population density of the counties.
We also derive access latency from the county center to the topology nodes by dividing the geolocation distance with speed of light.

The results of this experiment are presented in Figure~\ref{fig:eval_scalability_csr_inflation_sorted}. 
The average latency inflation  for each topology is depicted in different overlapped bars: the green bar represents the tree path inflation, the red bar represents the centralized approach inflation, and the blue bar represents the LISP-based inflation.
The LISP-based inflation is evaluated under the assumption of no cache hits.
This assumption is  rooted in our emphasis on the CSR message, which is the first packet in the connection establishment process. In the context of LISP, cache misses are prevalent for the first packet, as this mechanism plays a crucial role in managing mobility. 
We can see that the LLP tree outperforms both the centralized approach and the LISP-based approach in all topologies, due to the shorter  packet paths along the LLP tree.
Across all topologies, the average LLP-based inflation for CSR is $0.0742$ or $7.42\%$.

\begin{figure}
\hspace*{\fill}
\begin{minipage}[t]{.63\linewidth}
  \centering
  \vspace{0em}
  \includegraphics[clip, width=0.95\columnwidth]{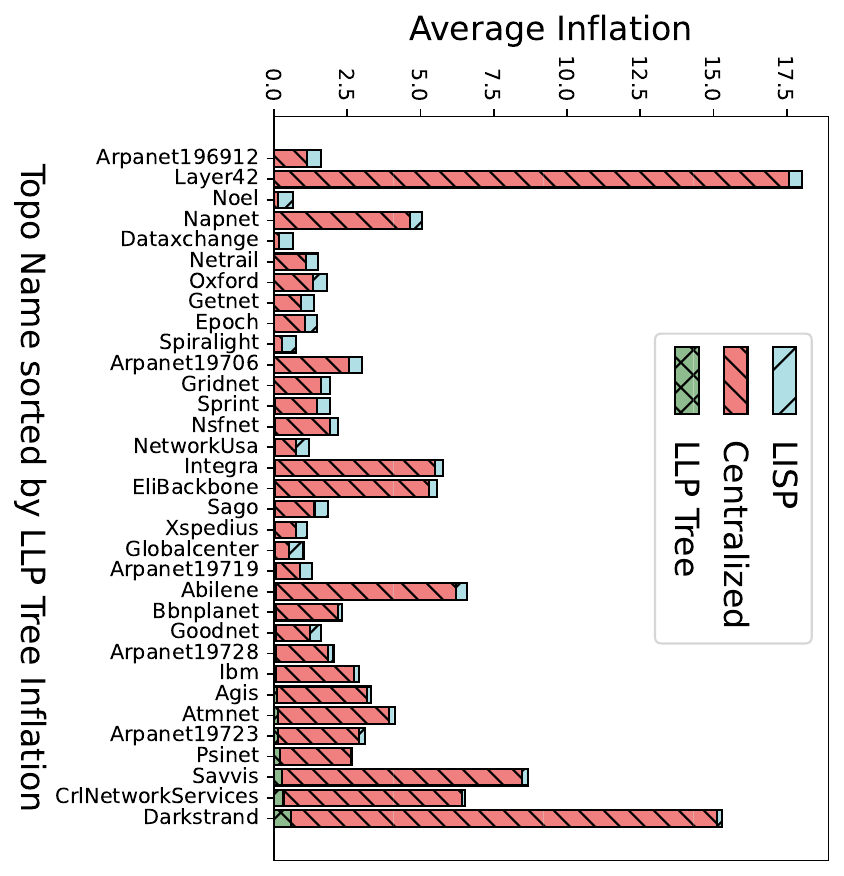}
  \captionof{figure}{Latency Inflation of CSR Processing}
  \label{fig:eval_scalability_csr_inflation_sorted}
\end{minipage}%
\hfill
\begin{minipage}[t]{.33\linewidth}
  \centering
  \vspace{0em}
  \includegraphics[clip, scale=0.30]{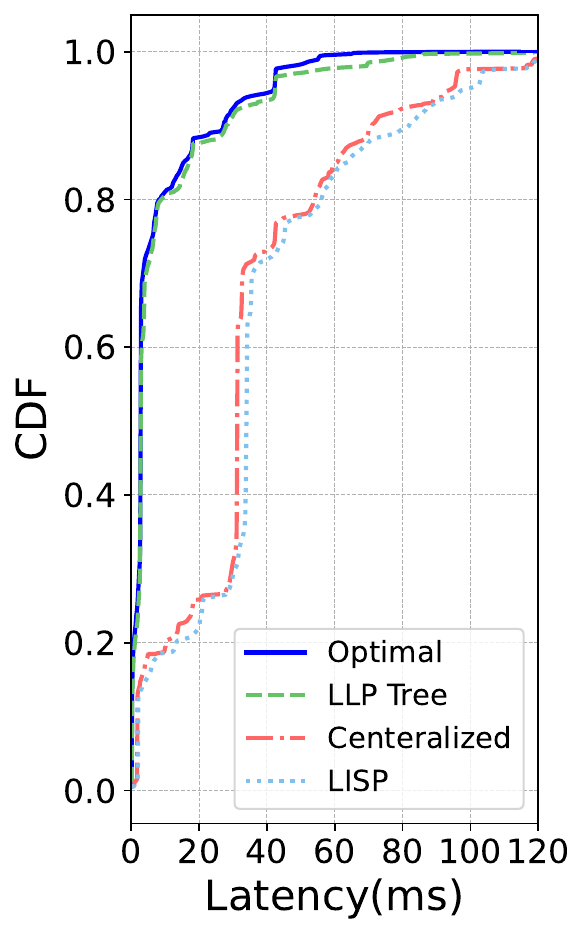}
  \vspace{-1em}
  \captionof{figure}{CSR Latency CDF}
  \label{fig:eval_scalability_csr_latency_cdf}
\end{minipage}
\hspace*{\fill}\vspace{-2em}
\end{figure}
Figure~\ref{fig:eval_scalability_csr_latency_cdf} displays the CSR latency CDF of various solutions for the Arpanet19728 topology, comparing LLP-based latencies with optimal latencies, centralized approach latencies and LISP-based approach latencies.
We can observe that the LLP-based latencies are closer to  optimal  latencies, while the latencies for the centralized approach are similar to the LISP-based approach due to cache misses causing LISP to behave similar to the centralized approach.

To better understand latency inflation in different situations, we categorize connections based on their direct path (the optimal) latencies and present the latency inflation for latency range, focusing on the low-latency categories where applications may be sensitive to latency changes.
Figure \ref{fig:eval_scalability_csr_latency_inflation_boxplot} compares the latency statistics of different solutions for each category, contrasting the LLP-based method, centralized approach, and LISP-based approach.
The LLP-based connection setup demonstrates stable low latency with low variance, while the centralized and LISP-based approach CSR latency is not only higher but also shows higher variance.

\begin{figure}[h]
\centering
    \includegraphics[clip,width=0.7\columnwidth]{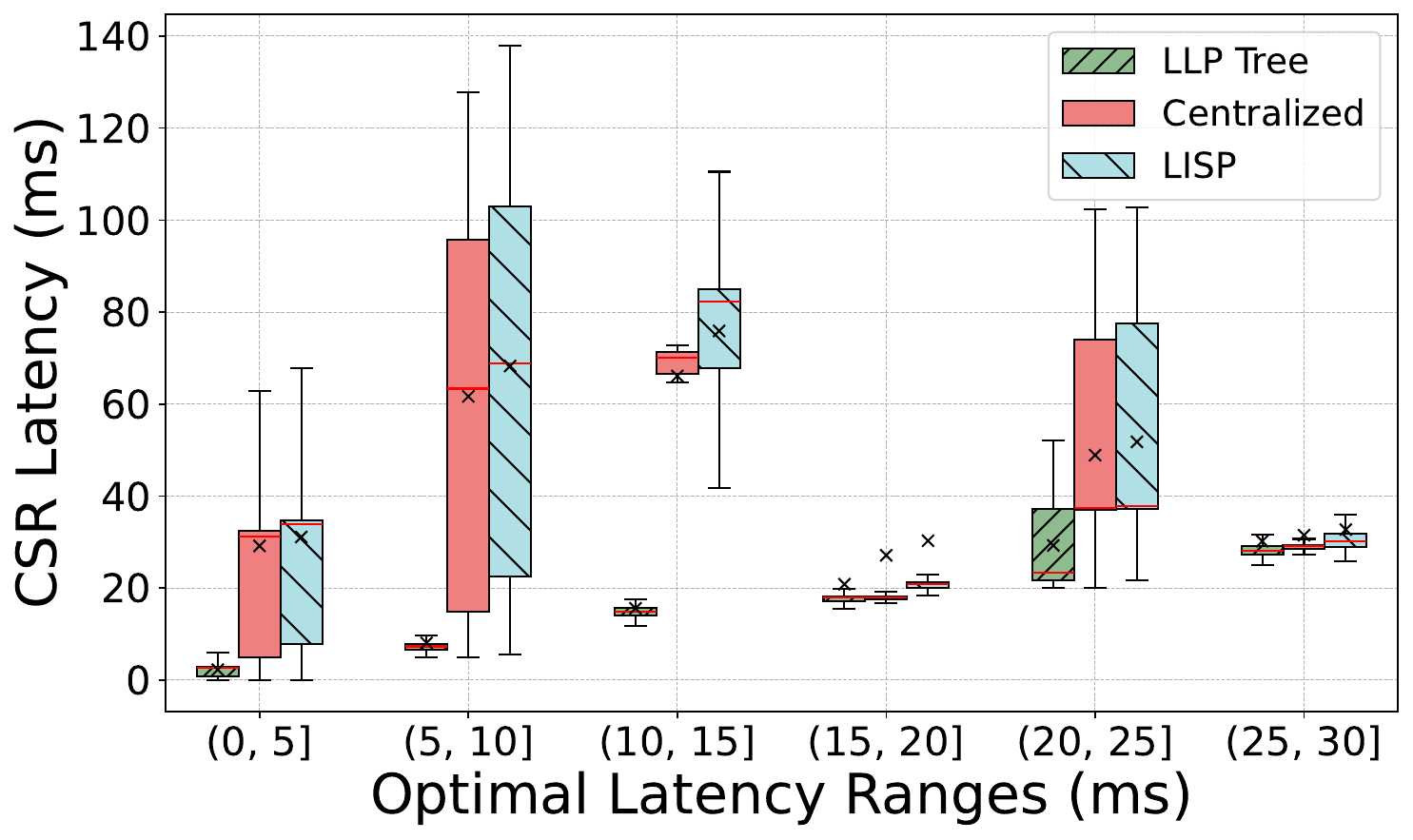}
    \caption{CSR Latency Inflation}
    \label{fig:eval_scalability_csr_latency_inflation_boxplot}\vspace{-0.5em}
    \vspace{0em}
\end{figure}

From the results, we can see that the LLP-based approach is effective in preventing latency increases compared to the centralized approach and LISP-based approach. 

\begin{figure*}[htbp]
\centering
    \begin{minipage}[m]{0.38\textwidth}
        \centering

        \includegraphics[clip, width=0.9\linewidth]{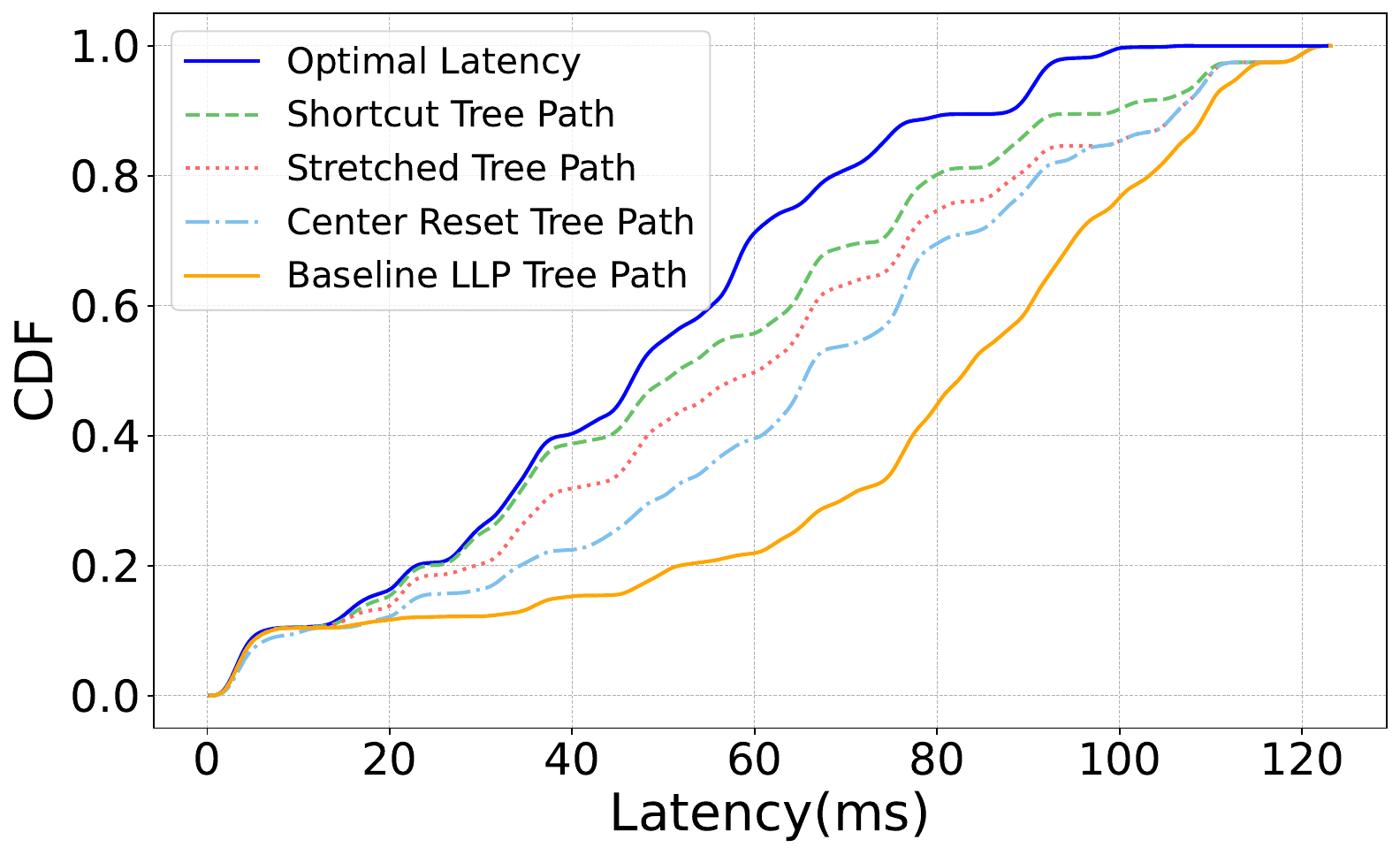}
        \caption{LLP Tree Heuristics: CSR Latency CDF}
        \vspace{3em}
        \label{fig:eval_heuristics_cdf}
    \end{minipage}
    \begin{minipage}[m]{0.27\textwidth}
        \centering
        \includegraphics[clip, width=\linewidth]{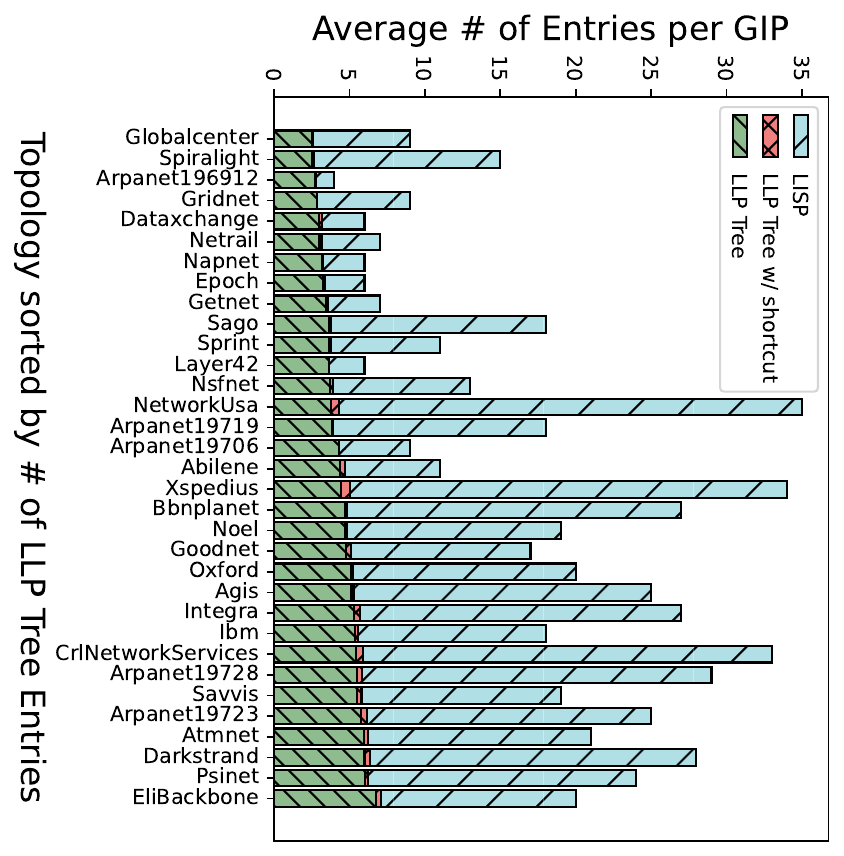}
        \caption{Memory Consumption over Topologies}
        \label{fig:eval_scalability_csr_replication_sorted}
    \end{minipage}
    \hspace{0em}
    \begin{minipage}[m]{0.27\textwidth}
        \includegraphics[clip, width=\linewidth]{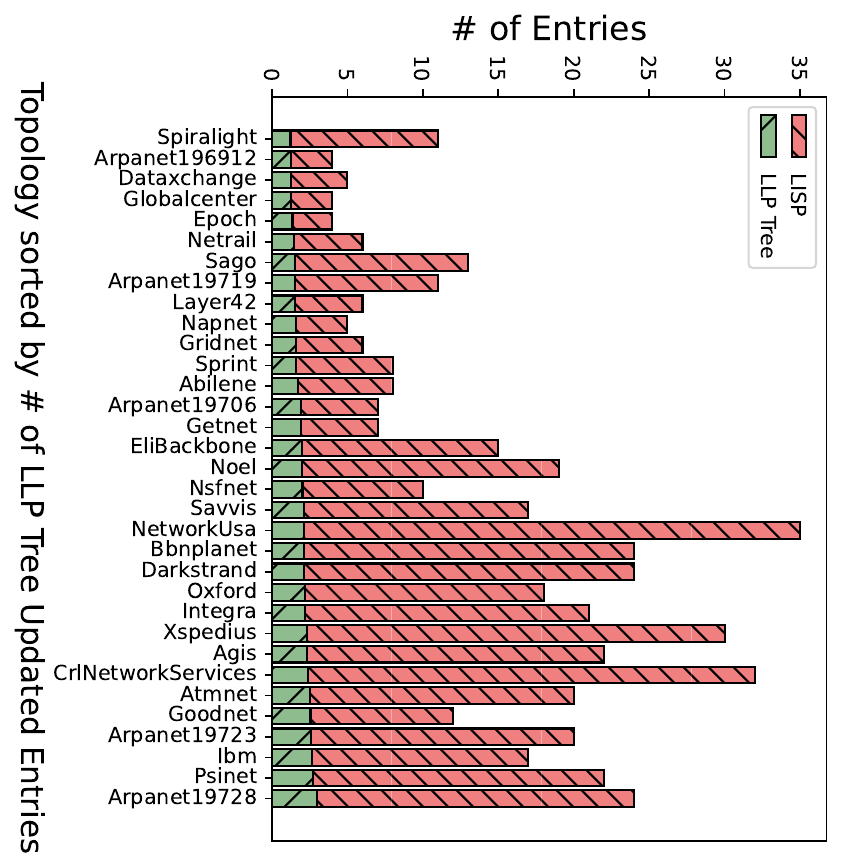}
      \caption{Number of LLP  Nodes Updated  Due to Mobility}
      \label{fig:eval_update_tree_vs_lisp}
    \end{minipage}

\vspace{-2em}
\end{figure*}

\BfPara{Improvement from heuristic algorithms}
\label{sec:eval:latency:improvement}
In this section, we build upon the baseline HCS algorithm to present the improvements in latency by the heuristic algorithms outlined in Section~\ref{improve-tree} and Section~\ref{sec:shortcut}.
Our evaluation is performed on the Arpanet19728 topology, and depict the CDF of latencies in Figure \ref{fig:eval_heuristics_cdf}.
Given that the described heuristics requires establishing specific conditions—such as maintaining fixed cluster centers when stretching the tree path—we apply the heuristics in the described order.
We observe the following:

\subsubsection{Cluster Center Reset}
We find an optimized cluster centers to balance the intra- and inter-cluster latencies in this step.
Figure~\ref{fig:eval_heuristics_cdf} shows CDF of this heuristic in light blue, which improves the LLP tree path latency.
On  average, we observe a  $15$\% improvement in CSR latency with this heuristic.

\subsubsection{Detours Elimination}
 Applied after the resetting cluster center heuristic, this heuristic further reduces the LLP tree path latency.
 Specifically, we observe an additional reduction in average path latency for CSRs by over $8.5$\%.

\subsubsection{Shortcut-based Bypass}
The shortcut heuristic limits the inflation of the tree, thereby improving latency performance. 
For our experiments, we  select $\epsilon=0.1$ for tree paths with latency lower than $10$ ms  and $\epsilon=1$ for all other tree paths.
We observe that the latency after applying the aforementioned heuristics is close to optimal.
The shortcut heuristic solely provides an average latency improvement of $8.1$\%.

To summarize, using these heuristics in sequence further improves the latency performance of the LLP tree.
After applying all optimizations to the Arpanet19728 topology, the CSR latency inflation is within $14.3$\% of the optimal latency.

\subsection{Memory Cost of Maintaining Entries}
We estimate the number of entries required for each mobile node (MN) in the LLP tree by the average tree depth, as only tree nodes from a leaf to its root maintain entries for an MN's GIP. 
We  compare these results to a fully distributed solution, where mapping entries for an MN are maintained  at every  node of the tree.
This worst-case scenario represents solutions such as LISP, where many network nodes (ITRs) may cache entries for an MN based on communication patterns.
To maintain a fair comparison with LISP, which only supports single access networks, we configure our solution to operate in single access mode as well. Consequently, our solution's entries store just one IP address per GIP, resulting in an overhead comparable to that of LISP, with no substantial increase in entry sizes.

For our evaluation, we build an LLP tree for each topology and compute its average tree depth and show the results in overlapped bars in Figure~\ref{fig:eval_scalability_csr_replication_sorted}. 
The results for LLP tree  are shown using green bars, and the worst-case results are shown using blue bars.
In our simulation, we assume that  traffic can  be initiated from anywhere within the network. 
For LISP, this implies that every network entry is potentially cached at each and every ingress node.
On average, our solution only needs to maintain $4.35$ entries per GIP, while in the LISP worst-case an average of $17.35$ entries are required.
This result shows our solution is $4$ times memory efficient than the LISP worst-case.

\BfPara{Shortcut memory overhead} 
In Section \ref{sec:shortcut}, we discussed a shortcut method to limit the latency inflation at the tree, which trades latency performance with more entries. 
In our simulation, we adopt the same $\epsilon$ values as in Section \ref{sec:eval:latency:improvement}.

To investigate the memory consumption introduced by the shortcut method, we simulated connections and calculated the extra entries needed over different topologies.
As shown in Fig.  \ref{fig:eval_scalability_csr_replication_sorted}, the entries needed by the shortcut mechanism are represented in red bars, and the extra entries introduced by the shortcut is very low.
This result shows adopting the shortcut heuristic is a reasonable trade-off, as it limits latency inflation without requiring a significant increase in memory consumption.

\subsection{Update Simulation}
\label{sec:eval:update_simulation}
\BfPara{Update disruption time}
In this section, we investigate the time costs to handle mobility events, since the disruption time caused by them is a critical factor in mobility management.
The disruption time includes the time to synchronize the location updates and the time for an end point to discover a mobility event.
We measured the time to discover a mobility event to be $38$ms on a Linux network namespace, where we turn off its access link in software to emulate the disruption.

In LISP, the location update process involves the mobile node (MN) registering with the new egress tunnel router (ETR) and unregistering with the old ETR.
The old ETR then sends notifications to the ingress tunnel router (ITR), causing the ITR to query the mapping server to synchronize.
On contrast, our solution allows the connecting endpoints to exchange mobility information directly through low-latency paths and handle the mobility event independently, without relying on LLP tree.

\begin{figure}[h]
\vspace{-1em}
\centering
  \includegraphics[trim={0 0 3cm 1.5cm},clip, width=0.7\columnwidth]{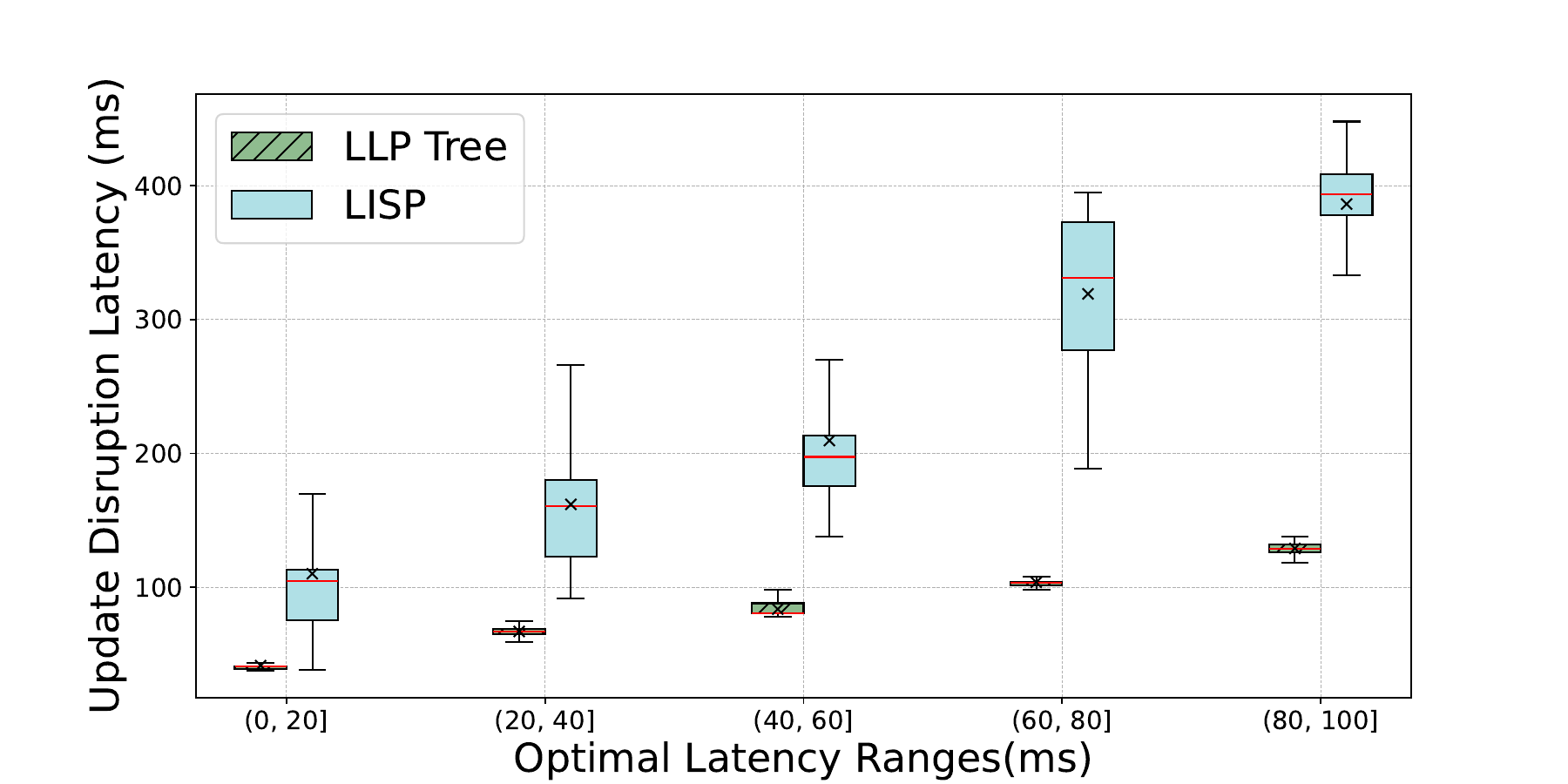}
  \caption{Update Disruption Latency: LLP Tree vs. LISP}
  \label{fig:eval_tree_vs_lisp}
  
  \vspace{-1em}
\end{figure}
We show the disruption time in Fig.~\ref{fig:eval_tree_vs_lisp}.
As a baseline, we group these  times by their direct path latency, which represents the optimal latency for completing the update in an ideal situation.
As can be seen in the figure, the average update disruption in LISP is $119.28$ms, while in our approach is $45.51$ms, showing that our solution is $2.6$ times faster than LISP in update. 
Our solution also shows lower variance in update disruption time than LISP in all categories.

\BfPara{Updated Nodes}
Along with the disruption time caused by mobility events, the number of nodes needs to be updated for this event is another critical performance metric.
We simulate location updates triggered by MN changing access points within a given topology.
Figure~\ref{fig:eval_update_tree_vs_lisp} displays the average number of LLP tree nodes that need updating for a mobility event in green bars.
In our approach, the average update involves fewer than $3$ nodes.
As a comparison, we investigate the worst case number of LISP nodes that need updating, and show the number in red bars.
One can find our approach updates much fewer nodes than LISP worst-case for handling a mobility event.
Moreover, in systems such as LISP, which depend on caching entries at the network's ingress points, numerous nodes may need updates even for local mobility.
These updates can impact a wide array of nodes distant from the MN.
In contrast, with the LLP tree solution, the effects are restricted to local areas.
This highlights the LLP tree scheme's enhanced efficiency in managing location updates.

\section{Related Work}

There are several IP mobility schemes that ensure session continuity during network transitions.  Prominent ones include Proxy Mobile IPv6 (PMIPv6)~\cite{rfc5213} and Mobile IPv6 (MIPv6)~\cite{mipv6}, which, despite their widespread use, can suffer from high latencies due to triangular routing via anchors. To address these latency issues, various route optimization proposals have been suggested, such as those in \cite{rfc6279, LRPMIP}. However, these are primarily applicable to situations where both the Mobile Node (MN) and Correspondent Node (CN) are connected to access gateways within the same provider domain or when data paths are optimized after session establishment.

 Distributed Mobility Management (DMM)~\cite{rfc8818, IEEE-DISTRIBUTED-MOBILITY}  schemes, such as Hierarchical Mobile IPv6 (HMIPv6)~\cite{rfc5380}, utilize local anchors at attachment points. However, these methods can incur high signaling costs for multiple anchors without significant latency reduction, or lack support for global mobility, causing connection breaks when the MN moves outside its anchor's local region.
Our solution, in contrast, ensures low latency for connection setup, is applicable regardless of MN and CN domain, and supports global mobility without excessive signaling overheads.

Approaches like LISP~\cite{rfc9300,rfc9301} and ILA~\cite{ila}   segregate Endpoint Identifiers  and Routing Locators,
with LISP using ingress tunnel routers to cache identifier-locator mappings, to help direct packets along least-latency paths. However, LISP's push mechanism~\cite{NERD} may not be scalable, while its pull model~\cite{costLISP} can lead to increased latency, disruption, and overheads due to cache misses and stale cache updates.

An improved BGP overlay based hybrid push/pull approach for LISP is proposed in~\cite{lisp_alt}.
Unlike our overlay solution, the  effectiveness and scalability  of this solution depends on   prefix aggregability of EIDs in local regions.
Furthermore, it may incur increased latency and more router entries due to reduced aggregability as devices become mobile.

A  DHT based LISP mapping lookup system is presented in~\cite{LISPDHT}.
In this case, DHTs are scalable but lookup latency can still be high.
As mentioned in Section~\ref{sec:eval},  LISP-TREE~\cite{lisp_tree}, and LISP-CONS~\cite{lisp_cons},  fail to keep up with highly dynamic update requirement. 
In contrast, our system enables end hosts to conduct access selection and switching, thereby facilitating rapid failovers and a broad spectrum of flow management policies. These policies can be optimized for various factors, including cost, energy usage, link quality, and network congestion.

Protocols such as QUIC \cite{quic, draft-ietf-quic-multipath-07} and MPTCP \cite{multipath_tcp} offer session continuity and multi-path data transmission capabilities at the transport layer. While they have proven valuable for mobility management~\cite{mptcpcell2012, cellbricks2021}, their customization for specific application needs often depends on individual applications, requiring varying degrees of complexity. Our system, in contrast, provides universally accessible multi-network utilization capabilities, for all applications using any IP protocol, not just QUIC and MPTCP, ensuring efficient data transmission across diverse network environments.

\section{Conclusion}
\label{conclusion}
We have presented a novel end-to-end solution for providing internet-scale low-latency mobility management for multi-connected devices. By leveraging the locator/ID separation principle and introducing a tree embedding-based overlay for dynamic locator lookup, our solution can provide near-optimal data forwarding and allow efficient utilization of multiple access networks available to a mobile device even under unexpected connectivity disruptions. Extensive analysis and simulations based on real network data show that our approach has significantly advantages over existing solutions. We are currently finalizing the implementation of the comprehensive end-to-end system to further validate our design and address potential deployment issues.

\section*{Acknowledgement}
We would like to thank the reviewers of INFOCOM'25 for their invaluable feedback. 
This work is partially supported by an NSF grant CNS-2008468 and an ONR grant N00014-23-1-2137. A major part of the work was done during the first author's internship at Nokia Bell Labs.

\clearpage

\printbibliography

\end{document}